\pgfplotsset{compat=1.17}
\def\th@plain{%
  \thm@notefont{}
  \itshape 
}
\def\th@definition{%
  \thm@notefont{}
  \normalfont 
}
\theoremstyle{plain}
\newtheorem{thm}{Theorem}
\newtheorem{lem}[thm]{Lemma}
\newtheorem{cor}[thm]{Corollary}
\newtheorem{prop}[thm]{Proposition}
\theoremstyle{definition}
\newtheorem{defn}{Definition}
\newtheorem{prob}{Problem}
\newtheorem{rem}{Remark}
\Crefname{thm}{Theorem}{Theorems}
\Crefname{lem}{Lemma}{Lemmas}
\Crefname{cor}{Corollary}{Corollaries}
\Crefname{conj}{Conjecture}{Conjectures}
\Crefname{defn}{Definition}{Definitions}
\Crefname{prob}{Problem}{Problems}
\Crefname{prop}{Proposition}{Propositions}
\Crefname{exam}{Example}{Examples}
\crefname{thm}{theorem}{theorems}
\crefname{lem}{lemma}{lemma}
\crefname{cor}{corollary}{corollaries}
\crefname{conj}{conjecture}{conjectures}
\crefname{prob}{problem}{problems}
\crefname{defn}{definition}{definitions}
\crefname{prop}{proposition}{propositions}
\crefname{exam}{example}{examples}
\renewcommand{\epsilon}{\varepsilon} 
\newcommand{\dens}{\mathcal D}
\newcommand{\proj}{\mathcal P}
\newcommand{\sym}{\mathfrak{S}}
\newcommand{\symsub}{\vee}
\newcommand{\antisymsub}{\wedge}
\newcommand{\ident}{I}
\newcommand{\Tr}{\mathrm{Tr}}
\newcommand{\fd}{\delta} 
\newcommand{\abs}[1]{\left|#1\right|}
\newcommand{\norm}[1]{\left\lVert#1\right\rVert}
\newcommand{\s}{\mathcal}
\newcommand{\spec}{\mathrm{spec}}
\newcommand{\diag}{\mathrm{diag}}
\newcommand{\GL}{\mathrm{GL}}
\newcommand{\Unit}{\mathrm{U}}
\newcommand{\kl}[2]{D\!\left(#1\!\parallel\! #2\right)} 
\newcommand{\qrl}[2]{S\!\left(#1\!\parallel\! #2\right)} 
\newcommand{\keyl}[2]{K\!\left(#1\!\parallel\! #2\right)} 
\newcommand{\infkeyl}[1]{\Omega(#1)} 
\newcommand{\specht}[1]{W_{#1}} 
\newcommand{\lilspecht}[1]{\omega_{#1}} 
\newcommand{\schur}[1]{V_{#1}} 
\newcommand{\lilschur}[1]{\pi_{#1}} 
\newcommand{\yf}{\mathbb{Y}} 
\newcommand{\spectra}{\mathbb{S}} 
\newcommand{\diff}{\mathop{}\!\mathrm{d}}
\newcommand{\lpm}{\mathrm{pm}} 
\newcommand{\strdia}[1]{\vcenter{\hbox{\includegraphics[scale=0.7]{figure-#1}}}}
\begin{document}

\title{A sufficient family of necessary inequalities for the compatibility of quantum marginals}

\author{Thomas C. Fraser}
\affiliation{Perimeter Institute for Theoretical Physics, 31 Caroline Street North, Waterloo, Ontario Canada N2L 2Y5}
\affiliation{Institute for Quantum Computing and Department of Physics and Astronomy, University of Waterloo, Waterloo, Ontario N2L 3G1, Canada}

\date{\today}

\begin{abstract}
    The quantum marginal problem is concerned with characterizing which collections of quantum states on different subsystems are compatible in the sense that they are the marginals of some multipartite quantum state.
    Presented here is a countable family of inequalities, each of which is necessarily satisfied by any compatible collection of quantum states.
    Additionally, this family of inequalities is shown to be sufficient: every incompatible collection of quantum states will violate at least one inequality belonging to the family.
\end{abstract}

\maketitle

\section{Introduction}
\label{sec:introduction}

The quantum marginal problem (QMP) is interested in characterizing the space of reduced/marginal states of a multipartite quantum state, and is widely regarded as being an important, albeit challenging problem to solve.

Connections and applications of the QMP to other topics in quantum theory (and beyond) include multipartite entanglement and separability~\cite{coffman2000distributed,walter2013entanglement}, quantum error correction~\cite{ocko2011quantum,huber2017quantum,yu2021complete}, entropic constraints~\cite{majenz2018constraints,christandl2018recoupling,kim2020entropy,osborne2008entropic}, other forms of quantum compatibility problems~\cite{heinosaari2016invitation,haapasalo2021quantum,doherty2008quantum}, the asymptotics of representations of the symmetric groups~\cite{daftuar2005quantum,christandl2006spectra,christandl2018recoupling}, the asymptotic restriction problem for tensors~\cite{christandl2018universal}, random matrix theory~\cite{collins2021projections,christandl2014eigenvalue}, and many-body physics~\cite{coleman2001reduced}.

The quantum marginal problem, and its associated terminology, is derived from an analogous problem in probability theory called the \textit{classical} marginal problem. The classical marginal problem is concerned with characterizing the relationships between the various marginal distributions of a joint, multivariate distribution~\cite{fritz2012entropic,vorob1962consistent,malvestuto1988existence}. As a joint probability distribution and its marginals can always be faithfully represented by the eigenvalues of a joint quantum state and its marginals using a product eigenbasis, the quantum marginal problem subsumes the classical marginal problem, and consequently, any of its applications~\cite{fritz2012entropic,liang2011specker,abramsky2011sheaf,fraser2018causal}.

One of the earliest formulations of the problem dates back to the early 1960s, when, for the purposes of simplifying calculations involving the atomic and molecular structure, quantum chemists became interested in characterizing the possible reduced density matrices of a system of $N$ interacting fermions~\cite{coulson1960present,coleman1963structure}. This version of the problem, referred to as the $N$-representability problem, has a long history~\cite{coleman2000reduced,coleman2001reduced,lude2013functional,borland1972conditions,ruskai2007connecting,klyachko2009pauli} that continues to evolve~\cite{mazziotti2012structure,mazziotti2012significant,klyachko2006quantum}.

Now the QMP comes in a variety of flavours which can be broadly organized by considering any additional assumptions or constraints that are imposed on either i) the properties of the joint state, and/or ii) the properties of the set of candidate density operators~\cite{tyc2015quantum}.

When focusing on the joint state, specializations of the QMP exist where the joint state is assumed to be fermionic~\cite{coleman2000reduced}, bosonic~\cite{wei2010interacting}, Gaussian~\cite{eisert2008gaussian,vlach2015quantum}, separable~\cite{navascues2021entanglement}, or having symmetric eigenvectors~\cite{aloy2020quantum}. Generally speaking, the QMP is difficult in the sense that it is a QMA-complete problem~\cite{liu2006consistency,liu2007quantum,wei2010interacting,bookatz2012qma}. For the purposes of this paper, the only restriction imposed on joint states will be that they live in a finite-dimensional Hilbert space, with a finite and fixed number of subsystems.

Regarding the list of candidate density operators, e.g., $(\rho_{AB}, \rho_{BC}, \rho_{AC})$, the list of subsystems they correspond to, e.g., $(AB, BC, AC)$, is known as the \textit{marginal scenario}, while the individual elements, e.g., $AB$, $BC$, or $AC$, will be referred to as \textit{marginal contexts}. A key consideration for understanding previous work on the QMP is whether the marginal contexts are disjoint. When the marginal contexts are disjoint, a complete solution to the QMP is known~\cite{klyachko2004quantum,klyachko2006general}. For a given specification of Hilbert space dimensions, this solution takes the form of a finite list of linear inequality constraints on the spectra of the candidate density operators. These solutions furthermore recover earlier results pertaining to low dimensional Hilbert-spaces~\cite{higuchi2003one,higuchi2003qutrit,bravyi2003requirements,han2005compatibility}.

In contrast, when the marginal scenario involves overlapping marginal contexts, existing results are comparatively more sporadic and typically weaker, being only applicable to low-dimensional systems, small numbers of parties, or only yielding necessary but insufficient constraints~\cite{chen2014symmetric,carlen2013extension,butterley2006compatibility,hall2007compatibility,chen2016detecting}. One promising approach, developed in~\cite{christandl2018recoupling} for relating marginal spectra to the recoupling theory of the symmetric group, appears limited to situations where the marginal contexts do not overlap too much.

Whenever a candidate set of density operators is explicitly given, one strategy to decide their compatibility is to use convex optimization techniques, e.g., semidefinite programming~\cite{vandenberghe1996semidefinite}. If the joint state is not necessarily pure, compatibility can be decided with a single semidefinite program~\cite{hall2007compatibility}. Additionally, when the joint state is assumed pure, compatibility can still be decided by an infinite hierarchy of semidefinite programs~\cite{yu2021complete}. In either case, analytic inequality constraints that serve as witnesses for incompatibility can be extracted from the outputs of such semidefinite programs~\cite{hall2007compatibility}.

The objective of this paper is to improve our understanding of the QMP, in particular for the case of overlapping marginal contexts, by i) deriving inequality constraints that are necessarily satisfied by all compatible collections of density operators, and ii) proving that if a collection of density operators satisfies these inequalities, then they are compatible. This paper begins by formally defining the QMP and then reformulating it from a different perspective. The primary advantage of this reformulation of the QMP is that it exposes an implicit symmetry of the problem which is helpful in deriving our main result.

\section{A Reformulation of the QMP}
\label{sec:reform_qmp}

This section introduces some notation and terminology that is sufficient to formally define both the QMP and an equivalent reformulation that is better suited for the techniques developed in subsequent sections.

First and foremost, every Hilbert space considered will be complex, finite-dimensional, and labeled by some subscript $X$, e.g., $H_{X}$. For each labeled Hilbert space, $H_{X}$, we will implicitly assume there exists some canonical orthonormal basis such that $H_{X} \cong \mathbb C^{d_X}$ where $d_X = \dim(H_{X})$. The corresponding set of linear operators, density operators (states), and pure states of $H_{X}$ are respectively denoted $\s L(H_{X})$, $\dens(H_{X})$, and $\proj(H_{X})$. The identity operator on $\s H_{X}$ is denoted $\ident_{X}$.

Given a list of labels, $S = (X_1, \ldots, X_k)$, which we identify with the concatenated string, $S \simeq X_1\cdots X_k$, the composite Hilbert space $H_{X_1} \otimes \cdots \otimes H_{X_k}$, will be labeled by $S$ itself and thus denoted $H_{S}$. For instance, if $S = ABC$, then $H_{ABC} = H_{A} \otimes H_{B} \otimes H_{C}$. Additionally, for any positive integer $n$ and label $X$, the list of labels consisting of $n$ copies of $X$ will be abbreviated by
\begin{equation}
    nX \coloneqq (X, \stackrel{n}{\ldots}, X) \simeq X\stackrel{n}{\cdots}X
\end{equation}
Using this notational convention, the $n$th tensor-power of a Hilbert space $H_{X}$ can be written as $H_{X}^{\otimes n} = H_{nX} = H_{X\stackrel{n}{\cdots}X}$.

Associated to any instance of the QMP, is a \textit{joint} (or \textit{global}) Hilbert space $H_{J}$ where $J = (X_1, \ldots, X_p)$ is a given finite list of labels called the \textit{joint context}. Every non-empty sublist $S$ of $J$ will be called a \textit{marginal context}. For each $S \subseteq J$, the partial trace from $H_{J}$ onto $H_{S}$ will be denoted by
\begin{equation}
    \label{eq:partial_trace}
    \Tr_{J \setminus S} : \s L(H_{J}) \to \s L(H_{S}).
\end{equation}
A finite, non-empty tuple of marginal contexts,
\begin{equation}
    \s M = (S_1, \ldots, S_m),
\end{equation}
is called a \textit{marginal scenario}. The cardinality of a marginal scenario will always be denoted by $m = \abs{\s M}$.
\begin{prob}[QMP]
    \label{prob:qmp}
    Given a marginal scenario, $\s M = (S_1, \ldots, S_m)$, and list of states $(\rho_{S_1}, \ldots, \rho_{S_m})$ where $\rho_{S_i} \in \dens(H_{S_i})$ for each $i \in \{1, \ldots, m\}$, decide if there exists a joint pure\footnotetext{The assumption of purity in the joint state can be made without loss of generality (see \cref{sec:pure_vs_mixed_qmp}).} state $\psi_{J} \in \proj(H_J)$ such that
    \begin{equation}
        \label{eq:qmp}
        \forall S_i \in \s M : \rho_{S_i} = \Tr_{J\setminus S_i}(\psi_{J}).
    \end{equation}
\end{prob}

Whenever such a pure state $\psi_{J}$ exists, the $m$-tuple of states $(\rho_{S_1}, \ldots, \rho_{S_m})$ is said to be \textit{compatible} and otherwise they are \textit{incompatible}.

For the sake of brevity, unless otherwise specified, a joint Hilbert space $H_{J}$ with joint context $J$ will always be implicitly given together with a particular marginal scenario $\s M = (S_1, \ldots, S_m)$ of length $m$.

Our first step is to reinterpret the $m$ linear constraints imposed on the joint state $\psi_{J}$ by \cref{eq:qmp} as a single linear constraint on the $m$th tensor-power state, $\psi_{J}^{\otimes m}$. Specifically, the $m$-tuple of states $(\rho_{S_1}, \ldots, \rho_{S_m})$ satisfies \cref{eq:qmp} if and only if
\begin{equation}
    \label{eq:parallel}
    \rho_{S_1} \otimes \cdots \otimes \rho_{S_m} = \Tr_{J \setminus S_1}(\psi_{J}) \otimes \cdots \otimes \Tr_{J \setminus S_m}(\psi_{J}).
\end{equation}
This observation motivates the following definitions.
\begin{defn}
    The \textit{Hilbert space on $\s M$}, denoted $H_{\s M}$, is
    \begin{equation}
        H_{\s M} \coloneqq H_{S_1} \otimes \cdots \otimes H_{S_m}.
    \end{equation}
    The \textit{partial trace from $mJ$ onto $\s M$}, denoted $\Tr_{mJ\setminus \s M}$, is
    \begin{equation}
        \Tr_{mJ\setminus \s M} \coloneqq \Tr_{J\setminus S_1} \otimes \cdots \otimes \Tr_{J\setminus S_m}.
    \end{equation}
\end{defn}
Note that the partial trace from $mJ$ onto $\s M$, $\Tr_{mJ\setminus \s M}$, is simply the partial trace operation mapping elements of $\s L(\s H_{mJ}) = \s L(H_{J}^{\otimes m})$ to elements of $\s L(H_{\s M}) = \s L(H_{S_1} \otimes \cdots \otimes H_{S_m})$.
\begin{defn}
    An \textit{$\s M$-product state} is any state, $\rho_{\s M} \in \dens(H_{\s M})$, of the form
    \begin{equation}
        \rho_{\s M} = \rho_{S_1} \otimes \cdots \otimes \rho_{S_m}
    \end{equation}
    where each component, $\rho_{S_i}$, is a state in $\dens(H_{S_i})$.
\end{defn}
Since there is a bijection between $m$-tuples of states $(\rho_{S_1}, \ldots, \rho_{S_m})$ and $\s M$-product states $\rho_{S_1} \otimes \cdots \otimes \rho_{S_m}$, the QMP can be equivalently restated entirely in terms of $\rho_{\s M}$.
\begin{prob}
    \label{prob:cqmp}
    Given an $\s M$-product state, $\rho_{\s M}$, determine whether or not there exists a pure state $\psi_{J}$ such that
    \begin{equation}
        \label{eq:cqmp}
        \rho_{\s M} = \Tr_{mJ\setminus \s M}(\psi_{J}^{\otimes m}).
    \end{equation}
\end{prob}
The equivalence between \cref{prob:qmp} and \cref{prob:cqmp} follows directly from \cref{eq:parallel}; moreover, whenever such a pure state $\psi_{J}$ exists in either formulation of the QMP, it satisfies both \cref{eq:qmp} and \cref{eq:cqmp}. Pursuant to this equivalence, an $\s M$-product state, $\rho_{S_1} \otimes \cdots \otimes \rho_{S_m}$, is said to be \textit{compatible} whenever the $m$-tuple of states $(\rho_{S_1}, \ldots, \rho_{S_m})$ is compatible (and \textit{incompatible} otherwise). The set of all compatible $\s M$-product states will be denoted $\s C_{\s M}$.

\section{Necessary and Sufficient Inequality Constraints}
\label{sec:necc_suff}

In this section, we construct inequalities that are necessarily satisfied by \textit{all} compatible $\s M$-product states, $\rho_{\s M}$.
These inequalities, therefore, can be used to answer the QMP in the negative; if an $\s M$-product state violates any of the forthcoming inequalities, then it is incompatible.
In addition, it will be shown that if an $\s M$-product state, $\rho_{\s M}$, satisfies all of the forthcoming inequalities, then it must be compatible.

These inequalities emerge from considering the permutation symmetry of the $k$th tensor power, $\psi_J^{\otimes k}$, of a pure state $\psi_J \in \proj(H_{J})$. For each $k \in \mathbb N$, let $\sym_{k}$ be the symmetric group on $k$ symbols, and let $T_{J} : \sym_{k} \to \s L(H_{J}^{\otimes k})$ be the representation of $\sym_k$ acting on $H_{J}^{\otimes k}$ by permutation of its $k$ factors.
\begin{defn}
    \label{defn:symsub}
    The $k$th symmetric subspace $\symsub^{k}H_{J} \subseteq H_{J}^{\otimes k}$ is defined as
    \begin{equation}
        \symsub^{k}H_{J} = \{ \ket{\phi} \in H_{J}^{\otimes k} \mid \forall \pi \in \sym_k, T_{J}(\pi) \ket{\phi} = \ket{\phi} \}.
    \end{equation}
    The orthogonal projection operator onto $\symsub^{k}H_{J}$ will be denoted by $\Pi_{J}^{(k)}$.
\end{defn}
Given any vector $\ket{\psi_{J}} \in H_{J}$, it is straightforward to verify that $\ket{\psi_{J}}^{\otimes k}$ is an element of the $k$th symmetric subspace $\symsub^{k}H_{J} \subseteq H_{J}^{\otimes k}$.
\begin{prop}
    \label{prop:copy_sym}
    Let $\psi_{J} = \ket{\psi_J}\bra{\psi_J}$ be a pure state, let $k \in \mathbb N$. Then\footnotetext{Throughout this paper, an inequality $A \geq B$ between Hermitian operators $A$ and $B$ always indicates that $A - B$ is positive semidefinite, i.e. $A - B \geq 0$. See \cite[Section V]{bhatia1997matrix}.}
    \begin{equation}
        \label{eq:kth_inclusion}
        \psi_{J}^{\otimes k} \leq \Pi_{J}^{(k)},
    \end{equation}
    where $\Pi_{J}^{(k)}$ is defined in \cref{defn:symsub}.
\end{prop}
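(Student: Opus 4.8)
The plan is to reduce the claimed operator inequality to the elementary fact that a rank-one orthogonal projection onto a line contained in the range of a larger orthogonal projection is dominated by that larger projection. First I would record that, since $\psi_{J}$ is a pure state, the vector $\ket{\psi_J}$ may be taken to be normalized, so that $\ket{v} \coloneqq \ket{\psi_J}^{\otimes k}$ is a unit vector in $H_{J}^{\otimes k}$ and $\psi_{J}^{\otimes k} = \ket{v}\!\bra{v}$ is precisely the orthogonal projection onto the line $\mathbb C\ket{v}$. The remark preceding the proposition already supplies the crucial geometric input: $\ket{v} \in \symsub^{k}H_{J}$, hence $\ket{v}$ lies in the range of $\Pi_{J}^{(k)}$ and $\Pi_{J}^{(k)}\ket{v} = \ket{v}$.

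Next I would establish, for an arbitrary orthogonal projection $P$ on a Hilbert space and an arbitrary unit vector $\ket{v}$ with $P\ket{v} = \ket{v}$, the inequality $\ket{v}\!\bra{v} \leq P$. The cleanest route is to evaluate the quadratic form of $P - \ket{v}\!\bra{v}$ on an arbitrary vector $\ket{\phi}$: writing $\ket{\phi} = P\ket{\phi} + (I-P)\ket{\phi}$ with the two summands orthogonal, one gets $\bra{\phi}P\ket{\phi} = \norm{P\ket{\phi}}^{2}$ and, since $\ket{v}$ is orthogonal to the range of $I-P$, also $\bra{\phi}v\rangle = \bra{v}(P\ket{\phi})$. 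Cauchy--Schwarz together with $\norm{\ket{v}} = 1$ then gives $\abs{\bra{v}\phi\rangle}^{2} \leq \norm{P\ket{\phi}}^{2}$, so
\begin{equation}
    \bra{\phi}\bigl(P - \ket{v}\!\bra{v}\bigr)\ket{\phi} = \norm{P\ket{\phi}}^{2} - \abs{\bra{v}\phi\rangle}^{2} \geq 0,
\end{equation}
which is exactly $P - \ket{v}\!\bra{v} \geq 0$ in the convention fixed in the footnote. (Equivalently, one may observe that $P - \ket{v}\!\bra{v}$ acts as the zero operator on the kernel of $P$ and as $I - \ket{v}\!\bra{v} \geq 0$ on the range of $P$, being a difference of the identity and a rank-one projection there.)

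Finally I would apply this with $P = \Pi_{J}^{(k)}$ and $\ket{v} = \ket{\psi_J}^{\otimes k}$, using the fact $\Pi_{J}^{(k)}\ket{v} = \ket{v}$ from the first paragraph, to conclude $\psi_{J}^{\otimes k} = \ket{v}\!\bra{v} \leq \Pi_{J}^{(k)}$, as desired. I do not expect any genuine obstacle here: the argument is essentially a one-line linear-algebra observation once the inclusion $\ket{\psi_J}^{\otimes k} \in \symsub^{k}H_{J}$ is invoked, and the only point requiring a moment's care is the normalization of $\ket{\psi_J}$, which is harmless since $\psi_J$ is assumed to be a (pure) state.
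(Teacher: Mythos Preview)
Your proposal is correct and is essentially the argument the paper has in mind: the paper does not supply a separate proof of this proposition, treating it as immediate from the sentence just before it (that $\ket{\psi_J}^{\otimes k}\in\symsub^{k}H_J$), and your write-up simply spells out the standard fact that a rank-one projection onto a unit vector in the range of an orthogonal projection is dominated by that projection.
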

By comparing \cref{eq:cqmp} with \cref{eq:kth_inclusion}, and recalling that partial traces are positive channels, it becomes possible to \textit{eliminate} $\psi_{J}$ from \cref{eq:cqmp}. For example, when $k=m$, the partial trace $\Tr_{mJ \setminus \s M}$ applied to \cref{eq:kth_inclusion} yields $\Tr_{mJ\setminus\s M}(\psi_{J}^{\otimes m}) \leq \Tr_{mJ\setminus\s M}(\Pi_{J}^{(m)})$ and thus we obtain the following corollary.
\begin{cor}
    If $\rho_{\s M}$ is a compatible $\s M$-product state, then
    \begin{equation}
        \label{eq:n1}
        \rho_{\s M} \leq \Tr_{mJ\setminus\s M}(\Pi_{J}^{(m)}).
    \end{equation}
\end{cor}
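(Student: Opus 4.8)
The plan is to chain together three facts already in place: the reformulation of compatibility in \cref{prob:cqmp}, the operator inequality of \cref{prop:copy_sym} at $k=m$, and the monotonicity of the partial trace with respect to the positive-semidefinite order. First I would unpack the hypothesis: since $\rho_{\s M}$ is compatible, \cref{prob:cqmp} supplies a pure state $\psi_{J} \in \proj(H_J)$ with $\rho_{\s M} = \Tr_{mJ\setminus\s M}(\psi_{J}^{\otimes m})$. Then I would specialize \cref{prop:copy_sym} to $k=m$, which gives $\psi_{J}^{\otimes m} \leq \Pi_{J}^{(m)}$ as an inequality between Hermitian operators on $H_{mJ}$.

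The second ingredient is that $\Tr_{mJ\setminus\s M}$ is order-preserving: if $A \leq B$ then $\Tr_{mJ\setminus\s M}(A) \leq \Tr_{mJ\setminus\s M}(B)$. I would justify this by observing that $\Tr_{mJ\setminus\s M}$ is a positive (indeed completely positive) linear map, being a composition of elementary partial traces each of the form $\mathrm{id}\otimes\Tr$ on a single tensor factor; hence it sends the positive-semidefinite operator $B-A$ to a positive-semidefinite operator, and linearity turns $\Tr_{mJ\setminus\s M}(B-A)\geq 0$ into the claimed inequality. Applying this to $\psi_{J}^{\otimes m} \leq \Pi_{J}^{(m)}$ and then substituting $\rho_{\s M} = \Tr_{mJ\setminus\s M}(\psi_{J}^{\otimes m})$ yields $\rho_{\s M} \leq \Tr_{mJ\setminus\s M}(\Pi_{J}^{(m)})$, which is the assertion.

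I do not expect a genuine obstacle here: the corollary is immediate once \cref{prop:copy_sym} is available, and essentially all of the mathematical content lives upstream in that proposition. The only points deserving a moment's care are bookkeeping ones — using the tensor power $k=m$ that matches $m = \abs{\s M}$ so that the domain of $\Tr_{mJ\setminus\s M}$ is the correct space, and noting that the argument is one-directional (it uses only the existence of a compatible $\psi_J$ and never reconstructs one), which is why the resulting inequality is at this stage only claimed to be necessary.
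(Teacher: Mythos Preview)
Your proposal is correct and matches the paper's own argument essentially line for line: specialize \cref{prop:copy_sym} to $k=m$, apply the positive map $\Tr_{mJ\setminus\s M}$ to both sides, and substitute $\rho_{\s M} = \Tr_{mJ\setminus\s M}(\psi_{J}^{\otimes m})$ from \cref{prob:cqmp}. There is nothing to add.
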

In \cref{sec:n1}, it is shown that the utility of this constraint is quite sensitive to the marginal scenario under consideration. For certain marginal scenarios, \cref{eq:n1} happens to be satisfied by \textit{all} $\s M$-product states, and thus is useless for the purposes of the QMP. Nevertheless, for other marginal scenarios, \cref{eq:n1} happens to be violated by some $\s M$-product states, and thus is a \textit{non-trivial} condition for the compatibility of an $\s M$-product state $\rho_{\s M}$.

Analogous reasoning can be used to construct stronger inequality constraints for the QMP. When $k$ is a multiple of $m$, $k=nm$, one can apply the $n$th tensor power of $\Tr_{mJ\setminus\s M}$ to both sides of \cref{eq:kth_inclusion}. While it is clear from the preceding discussion that this will yield inequality constraints necessarily satisfied by all $\s M$-product states, we will additionally show that their satisfaction for all $n \in \mathbb N$ is \textit{sufficient} to conclude that $\rho_{\s M}$ must be an $\s M$-product state.
\begin{thm}
    \label{thm:main}
    An $\s M$-product state, $\rho_{\s M}$, is compatible if and only if for all $n \in \mathbb N$,
    \begin{equation}
        \label{eq:nth_order}
        \rho_{\s M}^{\otimes n} \leq \Tr_{mJ\setminus\s M}^{\otimes n}(\Pi^{(nm)}_{J}).
    \end{equation}
\end{thm}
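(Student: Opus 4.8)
\emph{Proof proposal.} The plan is to treat the two implications separately; the ``only if'' direction is essentially immediate from \cref{prop:copy_sym}, while the converse rests on a large-deviations argument. For ``only if'', suppose $\rho_{\s M}$ is compatible. By the equivalence with \cref{prob:cqmp} there is a pure $\psi_J \in \proj(H_J)$ with $\rho_{\s M} = \Tr_{mJ\setminus\s M}(\psi_J^{\otimes m})$, hence $\rho_{\s M}^{\otimes n} = \Tr_{mJ\setminus\s M}^{\otimes n}(\psi_J^{\otimes nm})$. Applying \cref{prop:copy_sym} with $k = nm$ gives $\psi_J^{\otimes nm} \leq \Pi_J^{(nm)}$, and since $\Tr_{mJ\setminus\s M}^{\otimes n}$ is a tensor product of partial traces it is completely positive, hence positive; applying it to both sides yields \cref{eq:nth_order}.

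For the converse, I would first rewrite the right-hand side of \cref{eq:nth_order}. Using the standard integral representation of the symmetric projector, $\Pi_J^{(nm)} = \binom{d_J + nm - 1}{nm}\int (\ket{\psi}\!\bra{\psi})^{\otimes nm}\,\diff\psi$ with $\diff\psi$ the unitarily invariant probability measure on unit vectors of $H_J$, and regrouping $(\ket{\psi}\!\bra{\psi})^{\otimes nm} = ((\ket{\psi}\!\bra{\psi})^{\otimes m})^{\otimes n}$, one obtains
\begin{equation}
    \Tr_{mJ\setminus\s M}^{\otimes n}\big(\Pi_J^{(nm)}\big) = \binom{d_J + nm - 1}{nm}\int \big(\sigma^{\psi}\big)^{\otimes n}\,\diff\psi,
\end{equation}
where $\sigma^{\psi} \coloneqq \Tr_{mJ\setminus\s M}\big((\ket{\psi}\!\bra{\psi})^{\otimes m}\big)$ ranges over exactly the set $\s C_{\s M}$ of compatible $\s M$-product states as $\ket{\psi}$ ranges over unit vectors of $H_J$. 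Here $\s C_{\s M}$ is compact, being the continuous image of the (compact) set of pure states of $H_J$, and the binomial prefactor is a polynomial in $n$ of degree $d_J - 1$. Thus \cref{eq:nth_order} asserts that $\rho_{\s M}^{\otimes n}$ is dominated by a polynomially-weighted average over $\s C_{\s M}$ of $n$-fold tensor powers.

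The crux is then a hypothesis-testing contradiction. Suppose $\rho_{\s M} \notin \s C_{\s M}$. Fix an informationally complete POVM $\{E_j\}_{j=1}^{L}$ on $H_{\s M}$, i.e.\ one whose elements span $\s L(H_{\s M})$, so that $\sigma \mapsto q^{\sigma} \coloneqq (\Tr(E_j\sigma))_{j=1}^{L}$ is injective; then $q^{\rho_{\s M}}$ lies at some distance $\delta > 0$ from the compact set $\{q^{\sigma} : \sigma \in \s C_{\s M}\}$. Let $F_n$ be the sum of those elements $E_{j_1}\otimes\cdots\otimes E_{j_n}$ of the $n$-fold POVM whose empirical frequency vector lies in the radius-$\delta/2$ ball about $q^{\rho_{\s M}}$, so that $0 \leq F_n \leq \ident$. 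Measuring $\sigma^{\otimes n}$ with this POVM produces i.i.d.\ outcomes with mean frequency vector $q^{\sigma}$, so Hoeffding's inequality supplies constants $C, c > 0$ — uniform over $\sigma \in \s C_{\s M}$, because each $q^{\sigma}$ is at distance at least $\delta/2$ from that ball — with $\Tr\big(F_n \sigma^{\otimes n}\big) \leq C e^{-cn}$ for every $\sigma \in \s C_{\s M}$, while the same estimate at $\sigma = \rho_{\s M}$, whose mean frequency vector \emph{is} the centre of the ball, gives $\Tr\big(F_n \rho_{\s M}^{\otimes n}\big) \to 1$. Pairing $F_n$ with \cref{eq:nth_order} and using the identity above,
\begin{equation}
    \Tr\big(F_n\rho_{\s M}^{\otimes n}\big) \;\leq\; \binom{d_J + nm - 1}{nm}\int \Tr\big(F_n(\sigma^{\psi})^{\otimes n}\big)\,\diff\psi \;\leq\; \binom{d_J + nm - 1}{nm}\,C e^{-cn},
\end{equation}
whose right-hand side tends to $0$ as $n \to \infty$ while the left-hand side tends to $1$ — a contradiction. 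Therefore $\rho_{\s M} \in \s C_{\s M}$, i.e.\ $\rho_{\s M}$ is compatible.

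The one genuinely delicate point I anticipate is the uniformity of the exponent $c$ over $\s C_{\s M}$ in $\Tr(F_n\sigma^{\otimes n}) \leq Ce^{-cn}$: this is a quantitative composite ``quantum Sanov'' statement, and carrying it out directly in terms of quantum relative entropies would have to contend with the non-convexity of $\s C_{\s M}$ and with rank-deficient states. Routing everything through an informationally complete POVM collapses it to a purely classical large-deviation estimate with an explicit, $\sigma$-independent exponent, which I believe is the cleanest route; the remaining ingredients — the integral formula for $\Pi_J^{(nm)}$, positivity of $\Tr_{mJ\setminus\s M}^{\otimes n}$, compactness of $\s C_{\s M}$, and the polynomial growth of $\dim \symsub^{nm}H_J$ — are routine.
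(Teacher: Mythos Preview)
Your argument is correct, and the ``only if'' direction coincides with the paper's. For the converse you follow the same high-level architecture as the paper --- rewrite $\Tr_{mJ\setminus\s M}^{\otimes n}(\Pi_J^{(nm)})$ as a polynomially weighted mixture of $n$-fold powers over $\s C_{\s M}$ (this is exactly the content of \cref{lem:de_finetti_compatible}), then exhibit a sequence of effects $F_n$ that separates $\rho_{\s M}^{\otimes n}$ from every $\sigma^{\otimes n}$, $\sigma\in\s C_{\s M}$, at an exponential rate that beats the polynomial prefactor --- but you implement the separating effects very differently. The paper builds $E_n$ from the Schur--Weyl isotypic projectors and twirled highest-weight vectors of \cref{sec:keyl_exclusive}, obtaining \cref{eq:incompatibility_measure} with an explicit rate $\infkeyl{\rho_{\s M}}=\inf_{\sigma\in\s C_{\s M}}\keyl{\rho_{\s M}}{\sigma}$ in terms of Keyl divergence; this feeds into the finite-$n$ bound \cref{eq:incompatibility_upper_bound}. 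You instead push everything through a fixed informationally complete POVM and invoke classical Hoeffding on the resulting i.i.d.\ types, which cleanly delivers a uniform exponent (depending only on $\delta$ and the POVM, not on $\sigma$) without any representation theory. Your route is more elementary and entirely self-contained; the paper's route costs the machinery of \cref{sec:schur_weyl,sec:spectra_partitions,sec:keyl_exclusive} but buys an explicit, state-theoretically meaningful rate and a non-asymptotic quantitative statement rather than a bare contradiction in the limit.
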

Note that $\Tr_{mJ\setminus\s M}^{\otimes n}$ is the partial trace operation taking elements of $\s L(H_{J}^{\otimes nm})$ to elements of $\s L(H_{\s M}^{\otimes n})$.

To prove \cref{thm:main} we use the following lemma, proven in \cref{sec:proof_finetti_lemma}, that the upper-bound in \cref{eq:nth_order}, up to normalization, represents the expected value of $\sigma_{\s M}^{\otimes n}$ when $\sigma_{\s M}$ is sampled according to a probability measure, $\nu_{\s M}$, whose support is precisely the set of compatible $\s M$-product states, denoted $\s C_{\s M}$.
\begin{lem}
    \label{lem:de_finetti_compatible}
    There exists a probability measure, $\nu_{\s M}$, over $\dens(H_{\s M})$, with support $\s C_{\s M}$, such that for all $n \in \mathbb N$,
    \begin{equation}
        \Tr_{mJ\setminus\s M}^{\otimes n}(\Pi^{(nm)}_J) = \tbinom{nm+d_{J}-1}{nm} \int_{\s C_{\s M}} \hspace{-1em}\nu_{\s M}(\diff \sigma_{\s M}) \sigma_{\s M}^{\otimes n},
    \end{equation}
    where $d_J = \dim(H_J)$.
\end{lem}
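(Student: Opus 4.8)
The plan is to identify the operator $\Tr_{mJ\setminus\s M}^{\otimes n}(\Pi_J^{(nm)})$ via the well-known integral representation of the symmetric-subspace projector. Recall that for any Hilbert space $H$ with $d = \dim H$,
\begin{equation}
    \Pi_{H}^{(N)} = \binom{N+d-1}{N}\int_{\proj(H)} \diff\mu(\varphi)\, \varphi^{\otimes N},
\end{equation}
where $\mu$ is the unitarily invariant (Fubini--Study) probability measure on pure states $\proj(H)$. Applying this with $H = H_J$, $d = d_J$, and $N = nm$, and then applying the linear map $\Tr_{mJ\setminus\s M}^{\otimes n}$ to both sides, I would write
\begin{equation}
    \Tr_{mJ\setminus\s M}^{\otimes n}(\Pi_J^{(nm)}) = \binom{nm+d_J-1}{nm}\int_{\proj(H_J)} \diff\mu(\psi_J)\, \Tr_{mJ\setminus\s M}^{\otimes n}\!\bigl(\psi_J^{\otimes nm}\bigr).
\end{equation}
The key algebraic observation is that $\psi_J^{\otimes nm}$, viewed in $\s L(H_J^{\otimes nm}) = \s L(H_J^{\otimes m})^{\otimes n}$, factors as $\bigl(\psi_J^{\otimes m}\bigr)^{\otimes n}$, so that $\Tr_{mJ\setminus\s M}^{\otimes n}(\psi_J^{\otimes nm}) = \bigl(\Tr_{mJ\setminus\s M}(\psi_J^{\otimes m})\bigr)^{\otimes n} = \sigma_{\s M}(\psi_J)^{\otimes n}$, where $\sigma_{\s M}(\psi_J) \coloneqq \Tr_{mJ\setminus\s M}(\psi_J^{\otimes m})$ is exactly the compatible $\s M$-product state associated to $\psi_J$ by \cref{eq:cqmp}.

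With this in hand, the map $\psi_J \mapsto \sigma_{\s M}(\psi_J)$ is a continuous function from the compact space $\proj(H_J)$ into $\dens(H_{\s M})$, and I would \emph{define} $\nu_{\s M}$ to be the pushforward of the measure $\mu$ along this map. Then the change-of-variables formula for pushforward measures gives
\begin{equation}
    \int_{\proj(H_J)} \diff\mu(\psi_J)\, \sigma_{\s M}(\psi_J)^{\otimes n} = \int_{\dens(H_{\s M})} \diff\nu_{\s M}(\sigma_{\s M})\, \sigma_{\s M}^{\otimes n},
\end{equation}
which combined with the two displays above yields the claimed identity, once we check that the integral may be restricted to $\s C_{\s M}$. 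For that, observe that $\nu_{\s M}$ is supported on the image of $\proj(H_J)$ under $\psi_J \mapsto \sigma_{\s M}(\psi_J)$; every point of this image is by construction a compatible $\s M$-product state, so $\mathrm{supp}\,\nu_{\s M} \subseteq \s C_{\s M}$, and since the image is compact (continuous image of a compact set) it is closed, hence equals its closure, so in fact $\mathrm{supp}\,\nu_{\s M}$ is exactly this image.

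The remaining—and I expect the only genuinely delicate—point is the claim that $\mathrm{supp}\,\nu_{\s M} = \s C_{\s M}$ rather than merely $\subseteq$; i.e., that \emph{every} compatible $\s M$-product state arises as $\sigma_{\s M}(\psi_J)$ for some \emph{pure} $\psi_J$. This is immediate from the definition of compatibility: a state $\rho_{\s M}$ is compatible precisely when there is a pure $\psi_J$ with $\rho_{\s M} = \Tr_{mJ\setminus\s M}(\psi_J^{\otimes m}) = \sigma_{\s M}(\psi_J)$, so $\s C_{\s M}$ equals the image of $\proj(H_J)$ under $\sigma_{\s M}$. One should also note in passing that $\s C_{\s M}$ is therefore compact, and that the measure $\nu_{\s M}$ is genuinely a Borel probability measure on $\dens(H_{\s M})$ since $\mu$ is one on $\proj(H_J)$ and pushforwards of probability measures along continuous maps between compact metric spaces are Borel probability measures. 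The main obstacle is thus less an obstacle than a bookkeeping matter: being careful that the tensor-factor regrouping $H_J^{\otimes nm} \cong (H_J^{\otimes m})^{\otimes n}$ is compatible with how $\Tr_{mJ\setminus\s M}^{\otimes n}$ and $\Pi_J^{(nm)}$ are defined, and that the binomial coefficient $\binom{nm+d_J-1}{nm}$ is indeed the dimension of $\symsub^{nm}H_J$.
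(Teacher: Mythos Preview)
Your approach is essentially identical to the paper's: integral representation of $\Pi_J^{(nm)}$ via the Haar measure on $\proj(H_J)$, factor $\psi_J^{\otimes nm}=(\psi_J^{\otimes m})^{\otimes n}$, define $\nu_{\s M}$ as the pushforward of $\mu$ along $\psi_J\mapsto\Tr_{mJ\setminus\s M}(\psi_J^{\otimes m})$, and change variables. The one soft spot is your sentence ``since the image is compact \ldots\ it is closed, hence equals its closure, so in fact $\mathrm{supp}\,\nu_{\s M}$ is exactly this image'': closedness of the image only gives $\mathrm{supp}\,\nu_{\s M}\subseteq\text{image}$, not equality. For the reverse inclusion you must use that the Fubini--Study measure $\mu$ has \emph{full support} on $\proj(H_J)$: then for any $\sigma_{\s M}\in\s C_{\s M}$ and any open $O\ni\sigma_{\s M}$, the preimage $\sigma_{\s M}^{-1}(O)$ is open (continuity) and nonempty, hence $\nu_{\s M}(O)=\mu(\sigma_{\s M}^{-1}(O))>0$. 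The paper makes exactly this argument in a footnote; once you insert it, your proof matches the paper's.
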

\begin{proof}
    See \cref{sec:proof_finetti_lemma}.
\end{proof}
Now consider a measurement effect, $E_{n}$, acting on $H_{\s M}^{\otimes n}$, i.e. a Hermitian operator $E_n \in \s L(H_{\s M}^{\otimes n})$ such that $0 \leq E_n \leq \ident_{\s M}^{\otimes n}$. If \cref{eq:nth_order} is satisfied by some state $\rho_{\s M} \in \dens(H_{\s M})$, and $\Tr(E_n\rho_{\s M}^{\otimes n}) \neq 0$, then \cref{lem:de_finetti_compatible} implies
\begin{equation}
    \label{eq:prob_ratio_lower}
    \sup_{\sigma_{\s M} \in \s C_{\s M}}\frac{\Tr(E_n\sigma_{\s M}^{\otimes n})}{\Tr(E_n\rho_{\s M}^{\otimes n})} \geq \tbinom{nm+d_{J}-1}{nm}^{-1}.
\end{equation}
Therefore, to show that every $\s M$-product state, $\rho_{\s M}$, eventually violates \cref{eq:nth_order}, and thus prove \cref{thm:main}, it suffices to prove the existence of a sequence of measurement effects, $n \mapsto E_n$, such that the above ratio of probabilities, as a function of increasing $n$, approaches zero faster than $\tbinom{nm+d_{J}-1}{nm}^{-1}$, thus violating \cref{eq:prob_ratio_lower}.

The particular problem of finding a sequence of measurements such that $\Tr(E_n\rho_{\s M}^{\otimes n})$ stays reasonably large while simultaneously minimizing $\Tr(E_n\sigma_{\s M}^{\otimes n})$ for all $\sigma_{\s M}$ distinct from $\rho_{\s M}$ is related to the problems of asymmetric quantum state discrimination and quantum hypothesis testing~\cite{hiai1991proper,ogawa2005strong,pereira2022analytical,hayashi2001asymptotics,hayashi2002two}. Broadly speaking, existing results in these fields are sufficiently strong to establish the claimed violation of \cref{eq:prob_ratio_lower} for each incompatible $\rho_{\s M} \not \in \s C_{\s M}$. Our specific approach relies on ideas developed by \citeauthor{keyl2006quantum}~\cite{keyl2006quantum} for the purposes of quantum state estimation\footnote{Our estimation-theoretic approach is largely inspired by applications of the \textit{spectral} estimation technique~\cite{keyl2005estimating} to the \textit{spectral} QMP~\cite{christandl2006spectra,christandl2007nonzero,christandl2018recoupling}.}.

In the interest of being constructive and non-asymptotic, in \cref{sec:keyl_exclusive} (specifically \cref{rem:state_discrim}) we show how to construct, for each $\rho_{\s M}$, an explicit sequence of projection operators, $n \mapsto E_n$, such that for all $n \in \mathbb N$,
\begin{equation}
    \label{eq:incompatibility_measure}
    \sup_{\sigma_{\s M} \in \s C_{\s M}}\frac{\Tr(E_{n} \sigma_{\s M}^{\otimes n})}{\Tr(E_{n} \rho_{\s M}^{\otimes n})} \leq \exp( - (n-d_{\s M}^2) \infkeyl{\rho_{\s M}} + c(\rho_{\s M}))
\end{equation}
where $c(\rho_{\s M}) \geq 0$ and $\infkeyl{\rho_{\s M}} \geq 0$ are quantities independent of $n$ (but dependent on $\rho_{\s M}$) and $d_{\s M} = \dim(H_{\s M})$. Additionally, it is shown that the exponential rate, $\infkeyl{\rho_{\s M}}$, vanishes if and only if $\rho_{\s M}$ is compatible and thus its positivity can serve as a witness of the incompatibility of $\rho_{\s M}$.

\begin{proof}[Proof of \cref{thm:main}]
    The discussion preceding \cref{thm:main} has already established the ``only if'' portion of \cref{thm:main}: applying $\Tr_{mJ\setminus\s M}^{\otimes n}$ to \cref{eq:kth_inclusion} when $k = nm$ yields
    \begin{align}
        \rho_{\s M}^{\otimes n}
        = \Tr_{mJ\setminus\s M}^{\otimes n}(\psi_J^{\otimes nm})
        \leq \Tr_{mJ\setminus\s M}^{\otimes n}(\Pi^{(nm)}_{J}).
        \label{eq:nth_disco}
    \end{align}
    Therefore, all that remains is to prove the ``if'' portion of \cref{thm:main}.
    Suppose $\rho_{\s M} \in \dens(H_{\s M})$ is a state that satisfies \cref{eq:nth_order} for some particular value of $n$. By combining \cref{eq:incompatibility_measure} with \cref{eq:prob_ratio_lower}, we conclude
    \begin{align}
        \label{eq:incompatibility_upper_bound}
        \infkeyl{\rho_{\s M}}
        &\leq \frac{\ln \tbinom{nm + d_{J} - 1}{nm} + c(\rho_{\s M})}{n-d_{\s M}^{2}}.
    \end{align}
    Since $\tbinom{nm + d_{J} - 1}{nm} \in O(n^{d_{J}-1})$ is polynomial of degree $d_{J}-1$ in $n$, the upper bound above approaches zero in the limit as $n \to \infty$. For finite $n$, the inequality in \cref{eq:incompatibility_upper_bound} merely implies that $\infkeyl{\rho_{\s M}}$ must be small. For the purposes of \cref{thm:main}, if a state $\rho_{\s M}$ satisfies \cref{eq:nth_order} for all $n$, \cref{eq:incompatibility_upper_bound} implies that $\infkeyl{\rho_{\s M}} = 0$ and thus $\rho_{\s M} \in \s C_{\s M}$ must be a compatible $\s M$-product state.
\end{proof}

\section{Conclusion}
\label{sec:conclusion}

This paper makes progress toward an analytic solution to the quantum marginal problem (QMP) by constructing an countably-infinite family of necessary operator inequalities whose satisfaction by a given tuple of density operators is sufficient to conclude their compatibility.
The primary advantage of this approach is its generality: for any finite Hilbert space dimension(s) and any number of subsystems with arbitrary overlap, the corresponding family of necessary inequalities is shown to be sufficient.
The results of this paper, therefore, constitute the first quantifier-free solution to the QMP for overlapping marginal contexts.
However, the characterization of compatible density operators produced by this approach is not finite, and thus inherently more challenging to compute.
Evidently, further insights will be required to produce a finite set of necessary and sufficient conditions for the overlapping QMP.

\begin{acknowledgments}
    I would like to thank J. Davis and R. Spekkens for numerous helpful discussions.
    The majority of this research was conducted while T.C.F. was supported by the Natural Sciences and Engineering Research Council of Canada (NSERC), grant 411301803.
\end{acknowledgments}

\bibliographystyle{apsrev4-1}
\bibliography{references}

\appendix

\section{Schur-Weyl Decompositions}
\label{sec:schur_weyl}

Given any representation of the symmetric group $\sym_n$ over a finite-dimensional complex space, such as the aforementioned tensor-permutation representation $T : \sym_n \to \s L((\mathbb C^d)^{\otimes n})$, Maschke's Theorem guarantees the representation is \textit{completely reducible} and therefore decomposes into irreducible subrepresentations~\cite[Theorem 1.5.3]{sagan2013symmetric}. Furthermore, the complete set of non-isomorphic irreducible representations of $\sym_n$ is isomorphic to the set of conjugacy classes of $\sym_n$~\cite[Proposition 1.10.1]{sagan2013symmetric} which itself is isomorphic to the set of partitions of $n$.
\begin{defn}
    A \textit{partition of $n$}, $\lambda = (\lambda_1, \ldots, \lambda_\ell)$, is a sequence of non-increasing ($\lambda_i \geq \lambda_{i+1}$) positive integers ($\lambda_i \in \mathbb N$) whose total sum is $n$ (${\sum}_i \lambda_i = n$). The length of $\lambda$ is denoted by $\ell = \ell(\lambda)$. The \textit{set of all partitions of $n$} will be denoted $\yf_n$.
\end{defn}
For each partition $\lambda \in \yf_n$, let $\lilspecht{\lambda} : \sym_n \to \specht{\lambda}$ denote the corresponding irreducible representation of $\sym_n$, otherwise known as the Specht module for $\lambda$~\cite[Section 2.3]{sagan2013symmetric}. Using this notation, the Maschke decomposition of the tensor-permutation representation $T : \sym_n \to \s L((\mathbb C^d)^{\otimes n})$ yields a decomposition of $(\mathbb C^d)^{\otimes n}$,
\begin{align}
    (\mathbb C^{d})^{\otimes n} \cong {\bigoplus}_{\lambda \in \yf_n} \specht{\lambda} \otimes \schur{\lambda}^d,
    \label{eq:maschkes_theorem}
\end{align}
where the $\schur{\lambda}^{d}$ denotes the \textit{multiplicity space}, whose dimension counts the number of isomorphic copies of $\specht{\lambda}$ in $(\mathbb C^{d})^{\otimes n}$. It is also worth noting that $\dim(\schur{\lambda}^d) > 0$ if and only if $\ell(\lambda) \leq d$~\cite{sagan2013symmetric} and therefore the above summands over $\lambda$ are implicitly restricted to the subset $\yf^d_n \subseteq \yf_n$ of partitions of $n$ with length at most $d$. Another result, referred to as Schur-Weyl duality \cite[Chapter 9]{procesi2007lie}, implies that the multiplicity space $\schur{\lambda}^d$ itself supports an irreducible representation of $\GL(d)$, denoted $\lilschur{\lambda} : \GL(d) \to \s L(\schur{\lambda}^d)$. Let $\ket{\phi_{\lambda}} \in \schur{\lambda}^{d}$ denote the unique highest weight vector of $\schur{\lambda}^{d}$ characterized by the property that
\begin{equation}
    \lilschur{\lambda}(\diag(x_1, \ldots, x_d)) \ket{\phi_{\lambda}} = \prod_{i=1}^{d} x_i^{\lambda_i} \ket{\phi_{\lambda}}
\end{equation}
for all $\diag(x_1, \ldots, x_d) \in \mathrm{GL}(d)$.
Furthermore, for each partition $\lambda \in \yf^d_n$, let
\begin{align}
    \label{eq:iota_iso}
    \iota_{\lambda} : \specht{\lambda} \otimes \schur{\lambda}^d \xhookrightarrow{} (\mathbb C^d)^{\otimes n}
\end{align}
be the $\sym_n \times \GL(d)$-intertwining isometry from the isotypic subspace $\specht{\lambda} \otimes \schur{\lambda}^d$ into $(\mathbb C^d)^{\otimes n}$ associated to $\lambda$. Furthermore, let
\begin{equation}
    \Pi_{d}^{\lambda} \coloneqq (\iota_{\lambda})(\iota_{\lambda})^{\dagger}
\end{equation}
denote the corresponding orthogonal projection operator acting on $(\mathbb C^d)^{\otimes n}$.
\begin{prop}
    \label{prop:decomp_sym_operator}
    Let $Q \in \s L((\mathbb C^d)^{\otimes n})$ be an $\sym_n$-invariant operator in the sense that
    \begin{equation}
        \forall g \in \sym_n : T(g) Q T^{\dagger}(g) = Q.
    \end{equation}
    Then $Q$ admits of the following decomposition,
    \begin{equation}
        \label{eq:Q_decomp}
        Q = \bigoplus_{\lambda \in \yf^d_n} \ident_{\specht{\lambda}} \otimes \tau_{\lambda}(Q),
    \end{equation}
    where the $\lambda$-component of $Q$, $\tau_{\lambda}(Q) \in \s L(\schur{\lambda}^{d})$, is defined as
    \begin{equation}
        \tau_\lambda(Q) = \frac{\Tr_{\specht{\lambda}}(\iota_{\lambda}^{\dagger} Q \iota_\lambda)}{\dim(\specht{\lambda})}.
    \end{equation}
\end{prop}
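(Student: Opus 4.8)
The plan is to prove this by a direct application of Schur's lemma, organized around the resolution of the identity furnished by the intertwining isometries $\iota_\lambda$. First I would observe that since the isotypic subspaces $\iota_\lambda(\specht{\lambda}\otimes\schur{\lambda}^d)$ are mutually orthogonal and together span $(\mathbb C^d)^{\otimes n}$, one has $\sum_{\lambda\in\yf^d_n}\Pi^{\lambda}_d = \sum_{\lambda}\iota_\lambda\iota_\lambda^\dagger = \ident$. Sandwiching $Q$ between two copies of this identity gives $Q = \sum_{\lambda,\mu}\iota_\lambda(\iota_\lambda^\dagger Q\iota_\mu)\iota_\mu^\dagger$, so the problem reduces to analyzing the blocks $Q_{\lambda\mu}\coloneqq\iota_\lambda^\dagger Q\iota_\mu\in\s L(\specht{\mu}\otimes\schur{\mu}^d,\ \specht{\lambda}\otimes\schur{\lambda}^d)$.

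Next I would exploit the $\sym_n$-invariance of $Q$. Because $\iota_\nu$ is $\sym_n$-intertwining, $T(g)\iota_\nu = \iota_\nu(\lilspecht{\nu}(g)\otimes\ident_{\schur{\nu}^d})$ for every $g\in\sym_n$ and every $\nu\in\yf^d_n$; combined with $T(g)QT^\dagger(g)=Q$ this yields $(\lilspecht{\lambda}(g)\otimes\ident)\,Q_{\lambda\mu} = Q_{\lambda\mu}\,(\lilspecht{\mu}(g)\otimes\ident)$, i.e. $Q_{\lambda\mu}$ is an $\sym_n$-morphism from $\specht{\mu}^{\oplus\dim(\schur{\mu}^d)}$ to $\specht{\lambda}^{\oplus\dim(\schur{\lambda}^d)}$. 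For $\lambda\neq\mu$ the irreducibles $\specht{\lambda}$ and $\specht{\mu}$ are non-isomorphic, so Schur's lemma forces $Q_{\lambda\mu}=0$ and only the diagonal blocks survive. For $\lambda=\mu$, Schur's lemma in the form $\mathrm{End}_{\sym_n}(\specht{\lambda}\otimes\schur{\lambda}^d)=\ident_{\specht{\lambda}}\otimes\s L(\schur{\lambda}^d)$ gives $Q_{\lambda\lambda}=\ident_{\specht{\lambda}}\otimes A_\lambda$ for a unique $A_\lambda\in\s L(\schur{\lambda}^d)$.

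Finally, to identify $A_\lambda$ I would take the partial trace over $\specht{\lambda}$: since $\Tr_{\specht{\lambda}}(\ident_{\specht{\lambda}}\otimes A_\lambda)=\dim(\specht{\lambda})\,A_\lambda$, we get $A_\lambda=\Tr_{\specht{\lambda}}(\iota_\lambda^\dagger Q\iota_\lambda)/\dim(\specht{\lambda})=\tau_\lambda(Q)$. Substituting back gives $Q=\sum_\lambda\iota_\lambda(\ident_{\specht{\lambda}}\otimes\tau_\lambda(Q))\iota_\lambda^\dagger$, which is precisely \cref{eq:Q_decomp} once the orthogonal direct sum over $\lambda$ is read through the embeddings $\iota_\lambda$. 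The only genuinely delicate point is the Schur-lemma computation of $\mathrm{End}_{\sym_n}$ of an isotypic component — the fact that equivariant endomorphisms act as the identity on the Specht factor and arbitrarily on the multiplicity factor — but this is a standard consequence of complete reducibility together with Schur's lemma for the irreducible $\specht{\lambda}$; no estimates or asymptotics enter, so I expect the remaining steps to be routine bookkeeping.
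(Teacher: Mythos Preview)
Your proposal is correct and is precisely the approach the paper takes: the paper's entire proof reads ``This follows from an application of Schur's lemma,'' and what you have written is exactly that application spelled out in detail. The decomposition into isotypic blocks via the $\iota_\lambda$, the vanishing of off-diagonal blocks by inequivalence of distinct $\specht{\lambda}$, and the identification of the diagonal blocks as $\ident_{\specht{\lambda}}\otimes A_\lambda$ via $\mathrm{End}_{\sym_n}(\specht{\lambda}\otimes\schur{\lambda}^d)\cong\s L(\schur{\lambda}^d)$ are the standard Schur-lemma steps the paper is invoking.
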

\begin{proof}
    This follows from an application of Schur's lemma~\cite[Theorem 4.29]{hall2015lie}.
\end{proof}

\begin{defn}
    \label{defn:twirled}
    Let $\lilschur{\lambda} : \GL(d) \to \schur{\lambda}^d$ be the irreducible representation of $\GL(d)$ with highest weight vector $\ket{\phi_\lambda} \in \schur{\lambda}^d$ where $\lambda \in \yf_n^d$. For each unitary operator $U \in \Unit(d) \subseteq \GL(d)$, let the \textit{twirled highest weight vector} be defined as $\ket{\phi^{U}_\lambda} \coloneqq \lilschur{\lambda}(U) \ket{\phi_\lambda} \in \schur{\lambda}^{d}$.
\end{defn}
In \cref{sec:keyl_exclusive}, specifically \cref{prop:hwv_lpm}, we shall see that the quantity $\bra{\phi_\lambda^{U}} \tau_\lambda(\rho^{\otimes n}) \ket{\phi_{\lambda}^{U}}$, which depends only on $\lambda \in \yf_n^d$ and $U^{\dagger} \rho U \in \dens(\mathbb C^d)$, admits of a formula that remains well-defined even when $\lambda$ is permitted to be a non-increasing sequence of \textit{non-negative real numbers}.

\section{Spectra \& Partitions}
\label{sec:spectra_partitions}

The purpose of this section is to develop a connection between i) partitions $\lambda \in \yf_n^d$ with length at most $d$, and ii) the possible eigenvalues of density operators $\rho \in \dens(\mathbb C^d)$.
\begin{defn}
    A subset $C \subseteq \mathbb R^d$ is called a \textit{convex cone} if it is closed under
    \begin{enumerate}[i)]
        \item addition: for any $x, y \in C$, $x+y \in C$, and
        \item multiplication: for any $x \in C$, and $a \geq 0$, $a x \in C$.
    \end{enumerate}
\end{defn}
Two convex cones that are relevant here will be the cone of non-negative real numbers,
\begin{equation}
    \mathbb R_{\geq 0}^{d} = \{ (x_1, \ldots, x_d) \in \mathbb R^d \mid \forall i : x_i \geq 0\},
\end{equation}
and the subset of non-increasing non-negative real numbers,
\begin{equation}
    \mathbb R_{\geq 0}^{d;\downarrow} = \{ (x_1, \ldots, x_d) \in \mathbb R^d \mid x_1 \geq \cdots \geq x_d \geq 0\}.
\end{equation}
While there is a natural surjective map from $\mathbb R_{\geq 0}^{d}$ to $\mathbb R_{\geq 0}^{d;\downarrow}$ which sorts the elements of $(x_1, \ldots, x_d)$ in a non-increasing order, there is also a \textit{bijective} linear map $\gamma : \mathbb R_{\geq 0}^{d} \to \mathbb R_{\geq 0}^{d;\downarrow}$ which takes partial sums. Specifically, $\gamma$ maps $y = (y_1, \ldots, y_{d}) \in \mathbb R_{\geq 0}^{d}$ to $\gamma(y) = (\gamma_1(y), \ldots, \gamma_d(y))$ where
\begin{equation}
    \label{eq:cumm}
    \gamma_i(y) = y_i + y_{i+1} + \cdots + y_d.
\end{equation}
The inverse of $\gamma$, henceforth denoted $\fd : \mathbb R_{\geq 0}^{d;\downarrow} \to \mathbb R_{\geq 0}^{d}$, takes finite differences; specifically, $\fd$ maps $x = (x_1, \ldots, x_{d}) \in \mathbb R_{\geq 0}^{d;\downarrow}$ to $\fd(x) = (\fd_1(x), \ldots, \fd_d(x))$, where
\begin{equation}
    \label{eq:diffs}
    \fd_i(x) = \fd_i(x_1, \ldots, x_{k}) = \begin{cases} x_i - x_{i+1} & 1 \leq i < d \\ x_d & i = k \end{cases}.
\end{equation}
\begin{defn}
    For each $x = (x_1, \ldots, x_d) \in \mathbb R_{\geq 0}^{d}$, the \textit{size of $x$}, $\abs{x}$, is the sum of its elements
    \begin{equation}
        \abs{x} = x_1 + \cdots + x_d,
    \end{equation}
    The \textit{normalization of $x$} is defined\footnote{Assuming $x \in \mathbb R_{\geq 0}^{d}$ is not equal to all-zero $d$-tuple, $x \neq (0, \ldots, 0)$, so that $\abs{x} > 0$.} as
    \begin{equation}
        \frac{x}{\abs{x}} = \left(\frac{x_1}{\abs{x}}, \ldots, \frac{x_d}{\abs{x}}\right).
    \end{equation}
\end{defn}
Two subsets of $\mathbb R_{\geq 0}^{d;\downarrow}$ will be crucial to the results of \cref{sec:keyl_exclusive}. The first subset was already discussed in \cref{sec:schur_weyl}, namely partitions of $n$ with length at most $d$: $\yf^d_n \subseteq \mathbb R_{\geq 0}^{d;\downarrow}$. If the length, $\ell$, of $\lambda = (\lambda_1, \ldots, \lambda_\ell)$ is strictly less than $d$, then it can be viewed as a element of  $\mathbb R_{\geq 0}^{d;\downarrow}$ by padding $\lambda$ with $d-\ell$ zeros, i.e. $\lambda \cong (\lambda_1, \ldots, \lambda_\ell, 0, \ldots, 0)$. The second subset corresponds to the set of possible eigenvalues, or spectra, of density operators $\dens(\mathbb C^d)$.
\begin{defn}
    The set of \textit{spectra}, or sorted probability distributions, is
    \begin{equation}
        \spectra^{d} = \{ s \in \mathbb R_{\geq 0}^{d;\downarrow} \mid  {\sum}_{i=1}^{d} s_i = \abs{s} = 1 \}.
    \end{equation}
\end{defn}
While the normalization of any partition, $\lambda \in \yf^d_n$, is a spectrum, $\frac{\lambda}{n} \in \spectra^d$, multiplying a spectrum, $s \in \spectra^d$, by $n \in \mathbb N$ does not necessarily produce a partition because the entries of $ns$, $(ns_1, \ldots, ns_d)$, may not be integer-valued. Nevertheless, $ns \in \mathbb R_{\geq 0}^{d;\downarrow}$ can always be approximated by a partition, $\lambda \in \yf_n^d$, so that $\abs{\lambda_i - ns_i} \leq 1$ for all $i \in \{1, \ldots, d\}$.\footnote{An explicit scheme for accomplishing such an approximation is to let $t = n - {\sum}_{i}\lfloor n s_i \rfloor$ and define $\lambda_i = \lfloor n s_i \rfloor + 1$ whenever $i \leq t$ and $\lambda_i = \lfloor n s_i \rfloor$ whenever $i > t$.} In \cref{sec:keyl_exclusive}, it will be useful to consider approximating $n s$ with a partition, $\lambda$, in a different manner, where i) degeneracies of $s$ are preserved, i.e., $\delta_i(s) = 0 \implies \delta_i(\lambda) = 0$, and ii) non-degeneracies of $s$ are adequately represented, e.g., $\delta_i(\lambda) \geq \delta_i(ns)$. The next lemma shows that this can always be accomplished by partitions, $\lambda$, whose size is approximately $n$.

\begin{prop}
    \label{prop:crit_approx}
    Let $s = (s_1, \ldots, s_k) \in \spectra_{d}$ be a spectrum and $n \in \mathbb N$. Let $\lambda = (\lambda_1, \ldots, \lambda_d) \in \mathbb N_{\geq 0}^{d;\downarrow}$ be defined by
    \begin{equation}
        \lambda_i = \lceil n (s_i - s_{i+1}) \rceil + \cdots + \lceil n (s_{d-1} - s_{d}) \rceil + \lceil n s_d \rceil,
    \end{equation}
    such that $\fd_i(\lambda) = \lceil \fd_i(ns) \rceil$ holds. Then $\lambda$ is a partition of size $\abs{\lambda}$ where\footnote{These bounds are also tight for every $d$: if $s = \tbinom{d+1}{2}^{-1}(d, d-1, \ldots, 1)$, then $n = 1$ or $n = \tbinom{d+1}{2}$ yields $\lambda = (d, d-1, \ldots, 1)$ with size $\abs{\lambda} = \tbinom{d+1}{2}$ which achieves the upper bound when $n = 1$ and the lower bound when $n = \tbinom{d+1}{2}$.}
    \begin{equation}
        \label{eq:crit_approx_3}
        n \leq \abs{\lambda} \leq n + \tbinom{d+1}{2} - 1.
    \end{equation}
\end{prop}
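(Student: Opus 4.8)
The plan is to reduce everything to the cumulative-sum map $\gamma$ and its inverse finite-difference map $\fd$ from \cref{sec:spectra_partitions}. The first observation is that the displayed formula for $\lambda$ says exactly that $\lambda = \gamma(\mu)$ for the componentwise ceiling $\mu = (\lceil \fd_1(ns)\rceil, \ldots, \lceil \fd_d(ns)\rceil)$; equivalently $\fd_i(\lambda) = \lceil \fd_i(ns)\rceil = \lceil n\fd_i(s)\rceil$, as asserted in the statement. Two facts are then immediate. Each $\lambda_i$ is a nonnegative integer, being a sum of ceilings of the nonnegative reals $n\fd_j(s) \geq 0$; and $\lambda_i - \lambda_{i+1} = \lceil n\fd_i(s)\rceil \geq 0$ for $i < d$, so $\lambda$ is non-increasing. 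Hence $\lambda \in \mathbb N_{\geq 0}^{d;\downarrow}$ and, after deleting any trailing zeros, is a genuine partition of $\abs{\lambda}$.

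Next I would compute $\abs{\lambda}$ by exchanging the order of summation: since $\fd_j(s)$ contributes to $\lambda_i$ for precisely the indices $i \leq j$, one obtains $\abs{\lambda} = \sum_{i=1}^d \lambda_i = \sum_{i=1}^d\sum_{j=i}^d \lceil n\fd_j(s)\rceil = \sum_{j=1}^d j\,\lceil n\fd_j(s)\rceil$. The key identity to invoke is $\sum_{j=1}^d j\,\fd_j(s) = \abs{s} = 1$; this is the general relation $\abs{\gamma(\mu)} = \sum_j j\,\mu_j$ specialized to $\mu = \fd(s)$ with $\gamma(\mu) = s$, and it can also be verified by a one-line telescoping of $\sum_{j=1}^{d-1} j(s_j - s_{j+1}) + d\,s_d$. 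Multiplying by $n$ gives $\sum_{j=1}^d j\,(n\fd_j(s)) = n$.

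The two bounds then follow by pinching the ceiling against its argument. For the lower bound, $\lceil x \rceil \geq x$ gives $\abs{\lambda} \geq \sum_j j\,(n\fd_j(s)) = n$. For the upper bound, the observation to exploit is that $\lceil x \rceil < x + 1$ holds \emph{strictly} for every real $x$, so $\abs{\lambda} = \sum_{j=1}^d j\,\lceil n\fd_j(s)\rceil < \sum_{j=1}^d j\,(n\fd_j(s) + 1) = n + \sum_{j=1}^d j = n + \tbinom{d+1}{2}$; since $\abs{\lambda}$ and $n + \tbinom{d+1}{2}$ are both integers, a strict inequality between them forces $\abs{\lambda} \leq n + \tbinom{d+1}{2} - 1$. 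The hard part, such as it is, will be precisely this last step — extracting the sharp $-1$ instead of settling for the crude $\lceil x\rceil \leq x+1$ estimate, which requires noticing that $\lceil x\rceil < x+1$ is always strict and then invoking integrality of $\abs{\lambda}$. The tightness assertion in the footnote is a separate, routine verification with $s = \tbinom{d+1}{2}^{-1}(d, d-1, \ldots, 1)$ and plays no role in the proposition itself.
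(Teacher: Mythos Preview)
Your proof is correct, but it is organized differently from the paper's. The paper first establishes the \emph{componentwise} estimate $0 \le \lambda_i - n s_i < d - i + 1$ by a short reverse induction on $i$ (using $\epsilon_i \coloneqq \lceil \fd_i(ns)\rceil - \fd_i(ns) \in [0,1)$ and the recursion $\lambda_j - n s_j = \lambda_{j+1} - n s_{j+1} + \epsilon_j$), and only then sums over $i$ to get $0 \le \abs{\lambda} - n < \tbinom{d+1}{2}$, finishing with the same integrality trick you use. You instead bypass the componentwise bound entirely by computing $\abs{\lambda} = \sum_j j\,\lceil n\fd_j(s)\rceil$ via the identity $\abs{\gamma(\mu)} = \sum_j j\,\mu_j$ and comparing directly against $\sum_j j\,(n\fd_j(s)) = n$. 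Your route is slightly more economical for the proposition as stated; the paper's route yields the additional information that each $\lambda_i$ approximates $n s_i$ to within $d-i+1$, which is a natural intermediate fact even though the paper does not appear to use it elsewhere.
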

\begin{proof}
    First note that for all $1 \leq i \leq d$,
    \begin{equation}
        \epsilon_i \coloneqq \lceil \fd_i(ns) \rceil - \fd_i(ns),
    \end{equation}
    is upper and lower bounded by $0 \leq \epsilon_i < 1$.
    From this observation, it will be shown that $\lambda$ approximates $ns$, specifically,
    \begin{equation}
        \label{eq:crit_approx_2}
        0 \leq \lambda_i - n s_i < d - i + 1,
    \end{equation}
    To prove \cref{eq:crit_approx_2}, we use (reverse) induction starting from the base case of $i = d$.
    Since $\lambda_d = \fd_d(\lambda) = \lceil \fd_d(ns) \rceil = \lceil n s_d \rceil = n s_d + \epsilon_d$, we have $\lambda_d - n s_d = \epsilon_d$ and thus \cref{eq:crit_approx_2} holds when $i = d$. Then, assuming \cref{eq:crit_approx_2} holds for $i = j+1$, we prove it holds for $i = j$. Since
    \begin{align}
        \fd_j(\lambda)
        &= \lambda_{j} - \lambda_{j+1} = \lceil \fd_j(ns) \rceil \\
        &= \fd_j(ns) + \epsilon_j = n s_j - n s_{j+1} + \epsilon_j,
    \end{align}
    we conclude that $\lambda_{j} - n s_j = \lambda_{j+1} - n s_{j+1} + \epsilon_j$ and thus $0 \leq \lambda_{j} - ns_{j} < d - j + 1$ which is \cref{eq:crit_approx_2} for $i = j$.
    Finally, \cref{eq:crit_approx_3} follows from \cref{eq:crit_approx_2} by summing over all $i$:
    \begin{equation}
        0 \leq \abs{\lambda} - n \abs{s} < \sum_{i=1}^{d} (d-i+1) = \tbinom{d+1}{2}.
    \end{equation}
    Since $\abs{\lambda}$ is necessarily an integer, $\abs{s} = 1$, and the upper bound above is strict, \cref{eq:crit_approx_2} holds.
\end{proof}

\section{Proof of \cref{lem:de_finetti_compatible}}
\label{sec:proof_finetti_lemma}

\begin{proof}[Proof of \cref{lem:de_finetti_compatible}]
    Let $d_{J} = \dim(H_J)$, let $\mu_{J}$ be the $U(d_{J})$-invariant Haar probability measure over the space of pure states $\proj(H_J)$,
    For any $k \in \mathbb N$, the orthogonal projection operator, $\Pi_J^{(k)}$, onto the symmetric subspace, $\symsub^{k} H_J$, is proportional to the expected value of $\psi_J^{\otimes k}$ when $\psi_J$ is sampled according to the probability measure $\mu_{J}$:
    \begin{equation}
        \label{eq:haar_sym}
        \Pi_{J}^{(k)} = \tbinom{k+d_{J}-1}{k}\int_{\proj(H_{J})} \mu_{J}(\diff \psi_{J}) \psi_{J}^{\otimes k},
    \end{equation}
    where the normalization factor is simply $\Tr[\Pi_{J}^{(k)}] = \tbinom{k+d_{J}-1}{k}$. The proof of \cref{eq:haar_sym} follows from Schur's lemma (see \cite[Proposition 6]{harrow2013church}).

    Next, define the map $\tau_{\s M} : \proj(H_J) \to \dens(H_{\s M})$ by
    \begin{equation}
        \tau_{\s M}(\psi_J) = \Tr_{mJ\setminus\s M}(\psi_J^{\otimes m}).
    \end{equation}
    Let $\nu_{\s M}$ be the push-forward measure of $\mu_{J}$ through $\tau_{\s M}$, i.e. $\nu_{\s M} = \mu_{J} \circ \tau_{\s M}^{-1}$.

    Next note that the coefficients $\tau_{\s M}(\psi_J)$ are homogeneous polynomials of degree $m$ in the coefficients of $\psi_{J}$, and thus $\tau_{\s M}$ is continuous and measurable. Additionally, by construction, the image of $\tau_{\s M}$ is precisely the set of compatible $\s M$-product states $\s C_{\s M}$. Therefore, since $\proj(H_J)$ is compact (as $H_J$ is finite-dimensional), $\s C_{\s M}$ is also compact (and thus closed).
    Moreover, the support of the pushforward measure, $\nu_{\s M} = \mu_J \circ \tau_{\s M}^{-1}$ is equal to $\s C_{\s M}$.\footnote{This is because, by the closure of $\s C_{\s M}$, $\rho_{\s M} \not \in \s C_{\s M}$ implies there exists an open set, $O$ containing $\rho_{\s M}$, such that $O \cap \s C_{\s M} = \emptyset$ which implies $\nu_{\s M}(O) = \mu_J(\tau_{\s M}^{-1}(O)) = \mu_J(\emptyset) = 0$, i.e. $\rho_{\s M}$ is not in support of $\nu_{\s M}$. Moreover, if $\sigma_{\s M} \in \s C_{\s M}$ and $O'$ is any open set containing $\sigma_{\s M}$, $\tau_{\s M}^{-1}(O')$ is non-empty and open (by continuity of $\tau_{\s M}$) in $\proj(H_J)$ and thus $\nu_{\s M}(O') = \mu_J(\tau_{\s M}^{-1}(O')) > 0$. Therefore, because $\nu_{\s M}(O') > 0$ for all open sets containing $\sigma_{\s M}$, $\sigma_{\s M}$ is in the support $\nu_{\s M}$.}
    Finally, using \cref{eq:haar_sym}, linearity of $\Tr_{mJ\setminus\s M}$, and a change of variables,
    \begin{align}
        &\Tr_{mJ\setminus\s M}^{\otimes n}(\Pi^{(nm)}_J) \nonumber \\
        &\quad\propto \int_{\proj(H_J)} \mu_{J}(\diff \psi_J) (\Tr_{mJ\setminus\s M}(\psi_J^{\otimes m}))^{\otimes n}, \\
        &\quad= \int_{\proj(H_J)} \mu_{J}(\diff \psi_J) (\tau(\psi_J))^{\otimes n}, \\
        &\quad= \int_{\s C_{\s M}} \nu_{\s M}(\diff \sigma_{\s M}) \sigma_{\s M}^{\otimes n}.
    \end{align}
\end{proof}

\section{Pure vs. Full QMP}
\label{sec:pure_vs_mixed_qmp}
One might wonder why the version of the QMP considered in this paper (\cref{prob:qmp}) seems to be exclusively interested in the existence of joint states that are \textit{pure}, $\psi_J \in \proj(H_{J})$, instead of the more general \textit{density operator}, $\rho_{J} \in \dens(H_{J})$. In order to distinguish between these two types of QMP, the former is sometimes called the \textit{pure} QMP, while the latter is sometimes called the \textit{full} QMP. Of these two variants, the full QMP is arguably a much closer analogy to the classical marginals problem~\cite{fritz2012entropic}.

The distinction between these two variants is strongest when the marginal scenario under consideration, $\s M = (S_1, \ldots, S_m)$, has disjoint marginal contexts, i.e. $S_i \cap S_j = \emptyset$ for all $i \neq j$, or equivalently $H_{\s M} = H_{S_1}\otimes \cdots \otimes H_{S_m} \cong H_J$. Under the assumption of disjoint marginal contexts, the full QMP becomes trivial; every collection of density operators $(\rho_{S_1}, \ldots, \rho_{S_m})$ are the $\s M$-marginals of the density operator $\rho_J = \rho_{S_1} \otimes \cdots \otimes \rho_{S_m}$. On the other hand, under this assumption, the pure QMP remains non-trivial.

At the level of generality considered in this paper, wherein the marginal scenario $\s M = (S_1, \ldots, S_m)$ is permitted to contain overlapping marginal contexts, e.g., $S_i \cap S_j \neq \emptyset$, the distinction becomes less important because the marginals of a mixed state can equivalently be viewed as the marginals of any of its purifications. Specifically, there exists a density operator $\sigma_{J} \in \dens(H_{J})$ with marginals $\rho_{S} = \Tr_{J\setminus S}(\sigma_{J})$ for all $S \in \s M$ if and only if there exists a joint pure state $\psi_{JJ'} \in \proj(H_{J} \otimes H_{J'})$ (where $H_{J'} \cong H_{J}$) such that $(\Tr_{J\setminus S}\otimes \Tr_{J'})(\sigma_{J}) = \rho_{S}$ for all $S \in \s M$. Consequently, the techniques developed in this paper, which directly apply to the pure QMP, can also be applied to any instance of the full QMP without substantial modification.

\section{Diagrammatics}
\label{sec:diagrammatics}

The purpose of this section is to briefly introduce some diagrammatic notation that will be useful for performing a few calculations in \cref{sec:n1}. The particular notations involving symmetrization and antisymmetrization used here (\cref{eq:sym_2_X} and onward), are taken from \citeauthor{cvitanovic2008group}'s excellent textbook~\cite{cvitanovic2008group} on diagrammatic calculations of invariants of Lie groups, and are essentially the same those used by \citeauthor{penrose1971applications}~\cite{penrose1971applications}. For a categorical justification of this notation, see~\cite{selinger2012finite}. For further applications within quantum theory, see~\cite{coecke2010compositional,wood2011tensor,biamonte2017tensor}.

The essential idea is to depict linear operators, $L : H_X \to H_Y$, by pictures with corresponding inputs and outputs:
\begin{equation}
    \strdia{generic_l_x_y}.
\end{equation}
Of course, two linear operators can be combined in at least three different ways; specifically, by addition $+$, tensor product $\otimes$, and sequential composition $\circ$. These operations are depicted respectively as
\begin{align}
    \strdia{composition_addition_LHS} &= \strdia{composition_addition_RHS1} + \strdia{composition_addition_RHS2},\\
    \strdia{composition_tensor_LHS} &= \strdia{composition_tensor_RHS}, \\
    \strdia{composition_circ_LHS} &= \strdia{composition_circ_RHS}.
\end{align}
Important special cases of this notation include the identity operator $\ident_X : H_X \to H_X$,
\begin{equation}
    \strdia{identity_X_X_explicit} = \strdia{identity_X_X},
\end{equation}
and vectors $\ket{\psi_X} : \mathbb C \to  H_{X}$ (and their conjugates $\bra{\psi_X} : H_{X} \to \mathbb C$) as
\begin{equation}
    \strdia{generic_ket}, \quad \text{and} \quad \strdia{generic_bra}.
\end{equation}
This notation is especially elegant for depicting two concepts frequently encountered in this paper: the partial trace and permutations.

First, given a bipartite operator $L : H_{X} \otimes H_{Y} \to H_{X} \otimes H_{Y}$, the partial trace $\Tr_{Y}$ over $Y$, is depicted as
\begin{equation}
    \strdia{generic_partial_trace_verbose} = \strdia{generic_partial_trace}.
\end{equation}
The trace over the identity operator $\ident_{X} : H_{X} \to H_{X}$, which is equal to the dimension of $H_X$, is therefore depicted as a closed loop
\begin{equation}
    d_X = \dim(H_X) = \strdia{identity_trace_explicit} = \strdia{identity_trace} = \strdia{dimension_loop}.
\end{equation}
Second, the tensor permutation representation, $T_{X} : \sym_{m} \to \s L(\s H_{X}^{\otimes m})$, of the symmetric group, $\sym_{m}$, has elements depicted naturally as follows. When $m=2$, $\sym_{2} = \{ e, (12) \}$, and the identity $T_{X}(e)$ and swap $T_{X}((12))$ are depicted respectively by
\begin{equation}
    \strdia{S2_e_X}, \quad \text{and} \quad \strdia{S2_12_X}.
\end{equation}
Analogously, for $m=3$, the $3!=6$ permutations in $\sym_{3}$ are depicted by
\begin{align}
    \begin{split}
        \strdia{S3_e_X}, &\qquad \strdia{S3_12_X}, \qquad \strdia{S3_23_X},\\
        \strdia{S3_13_X}, &\qquad \strdia{S3_123_X}, \qquad \strdia{S3_321_X}.
    \end{split}
\end{align}
The orthogonal projection operator, $\Pi^{(2)}_{X}$, onto the symmetric subspace $\symsub^{2}\s H_{X} \subseteq \s H^{\otimes 2}_X$, referred to as the \textit{symmetrization operator}, is given the following unique notation:
\begin{equation}
    \label{eq:sym_2_X}
    \strdia{symmetrization_2_X}
    \coloneqq
    \strdia{symmetrization_2_X_explicit}
    =
    \frac{1}{2}\strdia{S2_e_X}
    +
    \frac{1}{2}\strdia{S2_12_X}.
\end{equation}
Similarly, the orthogonal projection operator onto the antisymmetric subspace $\antisymsub^{2}\s H_{X} \subseteq \s H^{\otimes 2}$, denoted by $\Pi_{X}^{(1,1)}$ and referred to as the \textit{antisymmetrization operator}, is depicted in a complementary manner:
\begin{equation}
    \strdia{antisymmetrization_2_X}
    \coloneqq
    \strdia{antisymmetrization_2_X_explicit}
    =
    \frac{1}{2}\strdia{S2_e_X}
    -
    \frac{1}{2}\strdia{S2_12_X}.
\end{equation}
Generalizing this notation for the orthogonal projection operators onto the symmetric and antisymmetric subspaces of $\s H_X^{\otimes m}$ for $m > 2$ can be done recursively as follows. For the sake of clarity, the Hilbert space label, $X$, can often be omitted without introducing ambiguity.
\begin{align}
    \strdia{symmetrization_n}
    &=
    \frac{1}{m} \left(
        \strdia{symmetrization_n_recursive_term_1}
        + (m-1)
        \strdia{symmetrization_n_recursive_term_2}
    \right),\\
    \strdia{antisymmetrization_n}
    &=
    \frac{1}{m} \left(
        \strdia{antisymmetrization_n_recursive_term_1}
        - (m-1)
        \strdia{antisymmetrization_n_recursive_term_2}
    \right).
\end{align}
Finally, in order to generalize the above symmetrization and antisymmetrization notation to the case of multipartite Hilbert spaces, e.g., $H_{XY} = H_X \otimes H_Y$, we introduce the following notational definition for the joint symmetrization $\Pi^{(2)}_{XY}$:
\begin{equation}
    \label{eq:parallel_sym_XY}
    \strdia{parallel_symmetrization_2_X_Y}
    =
    \frac{1}{2}
    \strdia{parallel_e_X_Y}
    +
    \frac{1}{2}
    \strdia{parallel_12_X_Y}.
\end{equation}

\section{The \texorpdfstring{$\boldsymbol{n=1}$}{n=1} Case}
\label{sec:n1}

This section explores the strength of the constraint imposed by \cref{eq:nth_order} for the special case when $n=1$ (equivalently \cref{eq:n1}) for the purposes of detecting incompatible $\s M$-product states. For marginal scenarios involving disjoint marginal contexts, it will be shown that \cref{eq:n1} happens to be satisfied by \textit{all} $\s M$-product states, and therefore is useless for the QMP. For at least some marginal scenarios involving non-disjoint marginal contexts, it will shown that \cref{eq:n1} is already capable of witnessing the incompatibility of certain $\s M$-product states. Finally, it is shown that for some (admittedly degenerate) marginal scenarios, the constraint imposed by \cref{eq:n1} is also sufficient for the corresponding QMP.

Throughout this section, unipartite subsystems are labeled alphabetically, e.g., $A, B, C\ldots$, and their respective dimensions denoted by lower-case letters, e.g., $a = d_A$, $b = d_B$, etc.

Consider the marginal scenario $\s M = (A, B)$ ($m=2$) for the joint context $J = AB$. In this scenario, the QMP is already fully solved: $\rho_{A}$ and $\rho_{B}$ are the marginals of some pure state $\psi_{AB}$ if and only if $\spec(\rho_{A}) = \spec(\rho_{B})$ (see \cref{sec:bipartite}). To what extent, if any, does \cref{eq:n1} reproduce this known solution? If $\rho_{A}$ and $\rho_{B}$ are the marginals of some pure state $\psi_{AB}$, \cref{eq:n1} implies
\begin{equation}
    \label{eq:n1_A_B}
    \rho_{A} \otimes \rho_{B} \leq (\Tr_{B}\otimes \Tr_{A})(\Pi^{(2)}_{AB}).
\end{equation}
To calculate the right-hand side of the above inequality, it will be convenient to use the diagrammatic notation introduced in \cref{sec:diagrammatics}. Specifically, $\Pi^{(2)}_{AB}$ can be depicted using \cref{eq:parallel_sym_XY} (with $X,Y$ substituted by $A,B$), and thus $(\Tr_{B}\otimes \Tr_{A})(\Pi^{(2)}_{AB})$ is equal to
\begin{align}
    \strdia{parallel_sym_AB_partial}
    &=
    \frac{1}{2}
    \strdia{parallel_e_AB_partial}
    +
    \frac{1}{2}
    \strdia{parallel_12_AB_partial},
    \\
    &=
    \frac{ab}{2}
    \strdia{parallel_AB_ident}
    +
    \frac{1}{2}
    \strdia{parallel_AB_ident},\\
    &= \frac{1+ab}{2} \strdia{parallel_AB_ident}.
\end{align}
Therefore, \cref{eq:n1_A_B} is equivalent to $\rho_{A} \otimes \rho_{B} \leq \frac{1}{2}(1+ab) \ident_{A} \otimes \ident_{B}$, i.e.
\begin{equation}
    \strdia{tensor_rA_rB} \leq \frac{1+ab}{2} \strdia{parallel_AB_ident_label_adjust},
\end{equation}
which is an inequality satisfied by all $(A, B)$-product states $\rho_{A} \otimes \rho_{B}$ because $\rho_X \leq \ident_X$ already holds for all $\rho_X \in \dens(H_X)$ and $ab \geq 1$. In fact, it is not too difficult to show that when $\s M = (X_1, \ldots, X_k)$ contains disjoint contexts, i.e. $X_i \cap X_j = \emptyset$ for $i \neq j$, the inequality in \cref{eq:n1} is always trivial because $\Tr_{mJ\setminus\s M}(T_{J}(\pi)) \geq \ident_{\s M}$ holds for all $\pi \in \sym_k$ and thus $\Tr_{mJ\setminus\s M}(\Pi^{(m)}_{J}) \geq \ident_{\s M}$ also. Fortunately, the same is not necessarily true for non-disjoint marginal scenarios.

For an example of a non-trivial instance of \cref{eq:n1}, consider the marginal scenario $\s M = (AB, AC, BC)$ for the joint context $J = ABC$. In this scenario, the aforementioned operator inequality becomes
\begin{equation}
    \label{eq:AB_AC_BC_ex}
    \rho_{AB} \otimes \rho_{AC} \otimes \rho_{BC} \leq (\Tr_{C} \otimes \Tr_{B} \otimes \Tr_{A})(\Pi_{ABC}^{(3)}).
\end{equation}
The projector $\Pi_{ABC}^{(3)}$ onto $\symsub^{3}(H_{A} \otimes H_{B} \otimes H_{C})$ can be expressed as
\begin{align}
    &\strdia{parallel_sym_ABC}=
    \frac{1}{3!}
    \bigg[
    \strdia{ABC_e}
    +\strdia{ABC_12}+ \\
    &\strdia{ABC_23}
    +\strdia{ABC_13}
    +\strdia{ABC_123}
    +\strdia{ABC_321}
    \bigg]. \nonumber
\end{align}
Therefore, $(\Tr_{C} \otimes \Tr_{B} \otimes \Tr_{A})(\Pi^{(3)}_{ABC})$ becomes
\begin{align}
    &\strdia{parallel_sym_ABC_partial} =
    \frac{1}{3!}
    \bigg[
    abc\strdia{ABC_eee} + \\
    &+c\strdia{ABC_ee12}
    +a\strdia{ABC_12ee}
    +b\strdia{ABC_e12e}
    +2\strdia{ABC_121212}
    \bigg]. \nonumber
\end{align}
To show that \cref{eq:AB_AC_BC_ex} is a non-trivial constraint, we consider consider the case of three qubits, i.e. $a=b=c=2$. For a given pair of qubits, the unique antisymmetric pure state (also called the singlet state), $\Phi = \ket{\Phi}\bra{\Phi} \in \proj(\mathbb C^2 \otimes \mathbb C^2)$, can be identified with $\ket{\Phi} = \frac{1}{\sqrt{2}} (\ket{0 1} - \ket{1 0})$ and depicted as follows
\begin{equation}
    \strdia{levi_civita_2_named} = \strdia{levi_civita_2}, \quad \text{s.t.} \quad \strdia{levi_civita_2_swap} = - \strdia{levi_civita_2}, \quad \strdia{levi_civita_2_norm} = 1.
\end{equation}
When applied to $(\Tr_{C} \otimes \Tr_{B} \otimes \Tr_{A})(\Pi^{(3)}_{ABC})$ (assuming $a = b = c = 2$) we obtain the identity
\begin{equation}
    \strdia{parallel_sym_ABC_partial_levi} = \frac{1}{3!}(abc-a-b-c-2) = 0,
\end{equation}
which, when combined with \cref{eq:AB_AC_BC_ex} proves that the $(AB, AC, BC)$ marginals of a three-qubit pure state $\psi_{ABC}$ always satisfy
\begin{equation}
    \label{eq:AB_AC_BC_equality}
    \strdia{AB_AC_BC_equal_three_qubits} = 0.
\end{equation}
An example of an incompatible triple of states $(\rho_{AB}, \rho_{AC}, \rho_{BC})$ whose incompatibility is witnessed by the above equality constraint is the triple of anti-correlated states, $\rho_{AB} = \rho_{AC} = \rho_{BC} = \frac{1}{2}(\ket{01}\bra{01}+\ket{10}\bra{10})$, where the left-hand side evaluates to $2^{-5}$. Other examples includes the triple of singlets $\rho_{AB} = \rho_{AC} = \rho_{BC} = \Phi$ (with value $2^{-4}$), or the triple of maximally mixed states $\rho_{AB} = \rho_{AC} = \rho_{BC} = \frac{I}{2} \otimes \frac{I}{2}$ (with value $2^{-6}$). An example of an inconsistent triple of states for which \cref{eq:AB_AC_BC_equality} happens to be satisfied is $\rho_{AB} = \rho_{BC} = \ket{00}\bra{00}$ and $\rho_{AC} = \ket{11}\bra{11}$.

To conclude, consider the rather non-standard marginal scenario $\s M = (X, X)$ for the joint context $J = X$. Taken literally, the QMP for this marginal scenario is to determine, for any given pair of states $\rho_X, \sigma_X \in \dens(\s H_X)$, whether or not there exists a pure state $\psi_X \in \proj(\s H_X)$ such that $\rho_X = \psi_X$ and $\sigma_X = \psi_X$. This marginal scenario can be regarded as ``non-standard'' for at least two reasons: (i) the marginal context $X$ is repeated twice in $\s M$, and (ii) since $X$ is not a proper subset of $X$, $\rho_X$ and $\sigma_X$ are not proper marginals of $\psi_X$. Taken together, the QMP for this scenario has a simple solution: $\rho_X$ and $\sigma_X$ are compatible if and only if they are both pure states and equal to each other. Nevertheless, in this scenario \cref{eq:n1} is a valid constraint; in particular, it simplifies to $\rho_{X} \otimes \sigma_X \leq \Pi^{(2)}_{X}$, or diagrammatically
\begin{equation}
    \label{eq:XX_n1}
    \strdia{tensor_rX_sX} \leq \strdia{symmetrization_2_X}.
\end{equation}
The above inequality implies that $\Tr(\sigma_X \rho_X) = 1$ since
\begin{equation}
    0 \leq \strdia{tensor_rX_sX_antisym} \leq \strdia{sym_antisym_trace} = 0,
\end{equation}
and
\begin{align}
    \strdia{tensor_rX_sX_antisym}
    &= \frac{1}{2} \strdia{tensor_rX_sX_antisym_e} - \frac{1}{2}\strdia{tensor_rX_sX_antisym_12}, \\
    &= \frac{1}{2} \left(1 - \strdia{tensor_rX_sX_trace}\right).
\end{align}
Since $\Tr(\sigma_X \rho_X) = 1$ holds if and only if $\sigma_X = \rho_X = \psi_X$ for some pure state $\psi_X$, we see that \cref{eq:n1}, which becomes \cref{eq:XX_n1}, is both necessary and \textit{sufficient} for the $\s M = (X, X)$ instance of the QMP.

\section{The Bipartite QMP}
\label{sec:bipartite}

This section considers the bipartite marginal scenario, $\s M = (A,B)$, for the joint context $J = AB$. For this scenario, the QMP is fully solved and admits of a simple solution: $\rho_{A}$ and $\rho_{B}$ are the marginals of a joint pure state $\psi_{AB} \in \proj(H_{AB})$ if and only if they have the same spectrum~\cite{tyc2015quantum,klyachko2004quantum}. A natural question arises: how does \cref{thm:main} recover this well-known result?

To answer this question, first let $a = \dim(H_A)$, $b = \dim(H_B)$, and let $\ell = \min(a,b)$. For this scenario, \cref{eq:nth_order} becomes
\begin{equation}
    \label{eq:bipartite_nth_order}
    (\rho_{A} \otimes \rho_{B})^{\otimes n} \leq (\Tr_{B}\otimes\Tr_{A})^{\otimes n}(\Pi_{AB}^{(2n)}).
\end{equation}
Now let $s_{A} \in \spectra^{a}$ and $s_{B} \in \spectra^{b}$ be the spectra of $\rho_{A}$ and $\rho_{B}$. Using the results of \cref{sec:keyl_exclusive} (or essentially the spectral estimation theorem~\cite{keyl2005estimating,christandl2006spectra}), together with \cref{eq:haar_sym}, it is possible to show that the exponential factor in \cref{eq:incompatibility_measure}, $\Omega(\rho_{A}\otimes\rho_{B})$, depends only on $r_A$ and $r_B$ and is equal to:
\begin{align}
    \Omega(\rho_{A} \otimes \rho_{B}) = \inf_{r \in \spectra^{\ell}}(\kl{s_{A}}{r}+\kl{s_{B}}{r}),
\end{align}
where $\kl{p}{q}$ is the relative entropy $\kl{p}{q} = \sum_{i} p_i (\ln p_i - \ln q_i)$. Since $\kl{p}{q}$ only vanishes if $p = q$, $\Omega(\rho_{A} \otimes \rho_{B})$ only vanishes if $s_{A} = s_{B}$. Therefore, we conclude that $\rho_{A}$ and $\rho_{B}$ are the $(A,B)$-marginals of a pure state $\psi_{AB} \in \proj(\mathbb C^{a} \otimes \mathbb C^{b})$ if and only if they have equal spectra.
Additionally, using Pinsker's inequality~\cite{reid2009generalised}, $\norm{p-q}_1^{2} \leq 2\kl{p}{q}$, and the triangle inequality for $\norm{\cdot}_1$, we obtain:
\begin{align}
    \norm{s_A - s_B}_1^2
    &\leq 3(\norm{s_A - r}_1^2 + \norm{s_B - r}_1^2)\\
    &\leq 6 (\kl{s_A}{r}+\kl{s_B}{r}).
\end{align}
Therefore, $\Omega(\rho_{A} \otimes \rho_{B}) \geq \norm{s_A - s_B}_1^2/6$.

A more direct consequence of \cref{eq:bipartite_nth_order} is the following proposition.
\begin{prop}
    Let $n \in \mathbb N$ and let $\alpha \in \yf_{n}^{a}$ and $\beta \in \yf_{n}^{b}$ be partitions. If $\rho_{A}\otimes \rho_{B}$ satisfies \cref{eq:bipartite_nth_order}, then
    \begin{equation}
        \label{eq:bipartite_result}
        s_{\alpha}(r_{A})s_{\beta}(r_{B}) \leq \sum_{\lambda \in \yf_{2n}^{\ell}} c^{\lambda}_{\alpha\beta} \frac{\dim(\schur{\lambda}^{a})\dim(\schur{\lambda}^{b})}{\dim(\specht{\lambda})},
    \end{equation}
    where $r_{A}$ and $r_{B}$ are the spectra of $\rho_{A}$ and $\rho_{B}$ respectively. Additionally, $s_{\alpha}$ and $s_{\beta}$ are Schur functions~\cite{sra2016inequalities,sagan2013symmetric} and $c^{\lambda}_{\alpha\beta}$ is the Littlewood-Richardson coefficient~\cite{littlewood1934group,fulton2000eigenvalues,pak2019largest}.
\end{prop}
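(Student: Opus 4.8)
The plan is to probe the operator inequality \cref{eq:bipartite_nth_order} with a well-chosen positive operator and then evaluate both sides using Schur--Weyl duality. Identify $H_{\s M}^{\otimes n}=(H_A\otimes H_B)^{\otimes n}$ with $H_A^{\otimes n}\otimes H_B^{\otimes n}$ by regrouping the $A$- and $B$-tensor factors, so that $(\rho_A\otimes\rho_B)^{\otimes n}$ becomes $\rho_A^{\otimes n}\otimes\rho_B^{\otimes n}$, and let $\Pi_a^\alpha\in\s L(H_A^{\otimes n})$ and $\Pi_b^\beta\in\s L(H_B^{\otimes n})$ be the Schur--Weyl isotypic projectors (in the sense of \cref{sec:schur_weyl}) onto $\specht{\alpha}\otimes\schur{\alpha}^a$ and $\specht{\beta}\otimes\schur{\beta}^b$ respectively. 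Since $\Pi_a^\alpha\otimes\Pi_b^\beta\ge0$, applying $\Tr[(\Pi_a^\alpha\otimes\Pi_b^\beta)(\,\cdot\,)]$ to both sides of \cref{eq:bipartite_nth_order} preserves the inequality, so it suffices to compute the two traces.

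For the left-hand side, $\Tr[(\Pi_a^\alpha\otimes\Pi_b^\beta)(\rho_A^{\otimes n}\otimes\rho_B^{\otimes n})]=\Tr[\Pi_a^\alpha\rho_A^{\otimes n}]\,\Tr[\Pi_b^\beta\rho_B^{\otimes n}]$. Because $\rho_A^{\otimes n}$ is $\sym_n$-invariant, \cref{prop:decomp_sym_operator} together with Schur--Weyl duality identifies its restriction to the $\alpha$-isotypic subspace with $\ident_{\specht{\alpha}}\otimes\lilschur{\alpha}(\rho_A)$, whence $\Tr[\Pi_a^\alpha\rho_A^{\otimes n}]=\dim(\specht{\alpha})\Tr[\lilschur{\alpha}(\rho_A)]=\dim(\specht{\alpha})\,s_\alpha(r_A)$, the last equality because the $\GL(a)$-character of $\lilschur{\alpha}$ is the Schur polynomial $s_\alpha$. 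Hence the left-hand side equals $\dim(\specht{\alpha})\dim(\specht{\beta})\,s_\alpha(r_A)\,s_\beta(r_B)$.

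The substantive computation is the right-hand side, $\Tr[(\Pi_a^\alpha\otimes\Pi_b^\beta)\,(\Tr_B\otimes\Tr_A)^{\otimes n}(\Pi_{AB}^{(2n)})]$. I would first pass to the adjoint of the partial trace, which pads the erased factors with identities, rewriting this as $\Tr[\widetilde\Pi\,\Pi_{AB}^{(2n)}]$ where $\widetilde\Pi$ acts on $H_{AB}^{\otimes 2n}\cong H_A^{\otimes 2n}\otimes H_B^{\otimes 2n}$ as $\Pi_a^\alpha$ on the $n$ surviving $A$-copies (and $\ident$ on the other $n$), and as $\Pi_b^\beta$ on the $n$ surviving $B$-copies. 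Now substitute $\Pi_{AB}^{(2n)}=\frac1{(2n)!}\sum_{\pi\in\sym_{2n}}T_A(\pi)\otimes T_B(\pi)$ and the central-idempotent formula $\Pi_a^\alpha=\frac{\dim(\specht{\alpha})}{n!}\sum_\sigma\chi_\alpha(\sigma)\,T_A(\sigma)$ (the sum running over the copy of $\sym_n$ permuting the surviving $A$-copies, and similarly for $\Pi_b^\beta$; these two copies of $\sym_n$ span a Young subgroup $\sym_n\times\sym_n\subseteq\sym_{2n}$). The trace then factorizes into an $A$-part and a $B$-part, each a weighted sum of cycle-count powers $a^{c(\cdot)}$ resp.\ $b^{c(\cdot)}$; expanding these via Schur--Weyl, $a^{c(\pi)}=\sum_\mu\dim(\schur{\mu}^a)\chi_\mu(\pi)$, and using orthogonality of $\sym_{2n}$-characters collapses the sum over $\pi$, leaving for each $\mu\vdash 2n$ the scalar $\sum_{\sigma,\tau\in\sym_n}\chi_\alpha(\sigma)\chi_\beta(\tau)\chi_\mu(\tau\sigma^{-1})$. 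The last ingredient is the Littlewood--Richardson branching rule $\mathrm{Res}^{\sym_{2n}}_{\sym_n\times\sym_n}\specht{\mu}\cong\bigoplus_{\alpha',\beta'}(\specht{\alpha'}\boxtimes\specht{\beta'})^{\oplus c^\mu_{\alpha'\beta'}}$, which, after one further application of $\sym_n$-character orthogonality, turns that scalar into $(n!)^2\,c^\mu_{\alpha\beta}$. Collecting constants, the right-hand side equals $\dim(\specht{\alpha})\dim(\specht{\beta})\sum_{\mu\vdash 2n}c^\mu_{\alpha\beta}\,\frac{\dim(\schur{\mu}^a)\dim(\schur{\mu}^b)}{\dim(\specht{\mu})}$, and the terms with $\depth{\mu}>\min(a,b)$ vanish since then $\dim(\schur{\mu}^a)\dim(\schur{\mu}^b)=0$, so the sum restricts to $\mu\in\yf_{2n}^{\min(a,b)}$.

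Combining the two evaluations and dividing by $\dim(\specht{\alpha})\dim(\specht{\beta})>0$ yields \cref{eq:bipartite_result}. The main obstacle I anticipate is purely organizational --- tracking which commuting copy of $\sym_n$ inside $\sym_{2n}$ acts on which block of tensor factors and matching the branching-rule indices correctly --- so it is probably cleanest to first isolate, as a standalone lemma, the identity $\Tr\bigl[\sum_{\rho\in\sym_{2n}}T_A(\rho)\otimes T_B(\kappa\rho)\bigr]=(2n)!\sum_\mu\frac{\dim(\schur{\mu}^a)\dim(\schur{\mu}^b)}{\dim(\specht{\mu})}\chi_\mu(\kappa)$, valid for any $\kappa\in\sym_{2n}$, and then apply it with $\kappa=\tau\sigma^{-1}$.
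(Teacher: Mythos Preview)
Your proposal is correct and follows the same overall strategy as the paper: probe \cref{eq:bipartite_nth_order} with $\Pi_A^{\alpha}\otimes\Pi_B^{\beta}$, evaluate the left-hand side via Schur--Weyl (giving $\dim(\specht{\alpha})\dim(\specht{\beta})\,s_\alpha(r_A)s_\beta(r_B)$), evaluate the right-hand side, and divide through.

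The difference lies in how the right-hand side is computed. The paper invokes $\GL(a)\times\GL(b)$-duality (the Cauchy identity) for the symmetric subspace, $\symsub^{2n}(H_A\otimes H_B)\cong\bigoplus_{\lambda\in\yf_{2n}^{\ell}}\schur{\lambda}^{a}\otimes\schur{\lambda}^{b}$, and then reads off the multiplicity of $\specht{\alpha}\otimes\specht{\beta}$ inside each $\specht{\lambda}$ under the restriction $\sym_{2n}\downarrow\sym_n\times\sym_n$. You instead expand everything concretely --- the symmetrizer as $\frac{1}{(2n)!}\sum_\pi T_A(\pi)\otimes T_B(\pi)$, the isotypic projectors as central idempotents --- and reduce the trace to character sums using $a^{c(\pi)}=\sum_\mu\dim(\schur{\mu}^{a})\chi_\mu(\pi)$, Schur orthogonality, and the Littlewood--Richardson branching rule. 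Your route is more elementary (it does not presuppose Howe duality) and more explicit; the paper's route is shorter but relies on a stronger imported result. Your auxiliary identity $\sum_{\rho\in\sym_{2n}}a^{c(\rho)}b^{c(\kappa\rho)}=(2n)!\sum_\mu\frac{\dim(\schur{\mu}^{a})\dim(\schur{\mu}^{b})}{\dim(\specht{\mu})}\chi_\mu(\kappa)$ is in fact a character-theoretic proof of (a form of) $\GL\times\GL$-duality, so the two arguments are cousins rather than strangers. The bookkeeping concern you flag --- that the two copies of $\sym_n$ act on complementary index sets (odd vs.\ even positions) and therefore generate a genuine Young subgroup $\sym_n\times\sym_n\subset\sym_{2n}$ --- is handled correctly.
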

\begin{proof}
    One of the most powerful tools for decomposing bipartite Hilbert spaces, specifically the symmetric subspace of a bipartite system $\symsub^{k}(H_{A} \otimes H_{B}) \cong V^{ab}_{(k)}$, is known as $\mathrm{GL}(a)\times\mathrm{GL}(b)$-duality~\cite{howe1987gl} (see also~\cite[Eq. (2.25)]{walter2014multipartite}):
    \begin{equation}
        V^{ab}_{(k)} \cong \bigoplus_{\lambda \in \yf_{k}^{\ell}} \schur{\lambda}^{a} \otimes \schur{\lambda}^{b},
    \end{equation}
    where $\ell = \min(a,b)$.
    Using this result, and applying $\Pi_{A}^{\alpha} \otimes \Pi_{B}^{\beta}$ to the right-hand-side of \cref{eq:nth_order}, we obtain
    \begin{align}
        \begin{split}
            &\Tr_{AB}^{\otimes n}\{(\Pi_{A}^{\alpha} \otimes \Pi_{B}^{\beta})(\Tr_{B}^{\otimes n}\otimes\Tr_{A}^{\otimes n})(\Pi_{AB}^{(2n)})\}\\
            &\quad=\dim(\specht{\alpha}) \dim(\specht{\beta}) \hspace{-0.5em}\sum_{\lambda \in \yf^{\ell}_{2n}}\hspace{-0.5em}c^{\lambda}_{\alpha\beta}\frac{\dim(\schur{\lambda}^{a})\dim(\schur{\lambda}^{b})}{\dim(\specht{\lambda})},
        \end{split}
    \end{align}
    where $c^{\lambda}_{\alpha\beta}$ counts the multiplicity of the $\sym_n\times \sym_n$ irreducible representation space $\specht{\alpha} \otimes \specht{\beta}$ inside $\specht{\lambda}$ under the restriction of $\sym_{2n}$ to $\sym_{n} \times \sym_{n}$. By comparison, applying $\Pi_{A}^{\alpha} \otimes \Pi_{B}^{\beta}$ to the left-hand-side of \cref{eq:nth_order} yields
    \begin{align}
        &\Tr((\Pi_{A}^{\alpha} \otimes \Pi_{B}^{\beta})(\rho_{A}^{\otimes n} \otimes \rho_{B}^{\otimes n}))\\
        &\quad=\Tr(\Pi_{A}^{\alpha}\rho_{A}^{\otimes n})\Tr(\Pi_{B}^{\beta}\rho_{B}^{\otimes n}),\\
        &\quad= s_{\alpha}(r_{A})\dim(\specht{\alpha})s_{\beta}(r_{B})\dim(\specht{\beta}).
    \end{align}
    Therefore, \cref{eq:bipartite_nth_order} implies \cref{eq:bipartite_result} and thus the claim holds.
\end{proof}

\section{Fermionic \& Bosonic QMP}
\label{sec:ferm_bos}

Our sufficient family of necessary inequality constraints can be modified to handle the fermionic and bosonic variants of the QMP. Recall that a state describing a system of $p$ fermions (resp. bosons) with $f$ internal degrees of freedom, is typically modeled by an element of the antisymmetric subspace $\antisymsub^{p} \mathbb C^{f}$ (resp. the symmetric subspace $\symsub^{p}\mathbb C^{f}$). Since $\antisymsub^{p}\mathbb C^{f}$ (resp. $\symsub^{p}\mathbb C^{f}$) can be viewed as a subspace of a $p$-partite composite Hilbert space $(\mathbb C^f)^{\otimes p}$, and $\symsub^{n}\antisymsub^{p}\mathbb C^{f}$ (resp. $\symsub^{n}\symsub^{p}\mathbb C^{f}$) serves as the respresentation space for an irreducible representation of $SU(\antisymsub^{p}\mathbb C^{f}) \cong SU(\dim(\antisymsub^{p}\mathbb C^{f})) \cong SU(\tbinom{f}{p})$ (resp. $SU(\tbinom{p+f-1}{p})$), the analogue of \cref{eq:haar_sym} holds and thus an analogue of \cref{lem:de_finetti_compatible} also holds. Altogether, a generalization of \cref{thm:main} holds:
\begin{cor}
    Let $H_{V} \subseteq H_{J}$ be a subspace of a joint Hilbert space, $H_{J}$. An $\s M$-product state, $\rho_{\s M} = \rho_{S_1} \otimes \cdots \otimes \rho_{S_m}$, is compatible with a joint pure state $\psi_{V} \in H_V \subseteq H_{J}$ in the subspace $H_{V}$ if and only if for all $n \in \mathbb N$,
    \begin{equation}
        \rho_{\s M}^{\otimes n} \leq \Tr_{m J\setminus \s M}^{\otimes n}(\Pi^{(nm)}_{V}),
    \end{equation}
    where $\Pi^{(nm)}_{V}$ is the projection operator onto the $nm$-symmetric subspace $\symsub^{nm}H_V \subseteq H_J^{\otimes nm}$.
\end{cor}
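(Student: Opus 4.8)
The plan is to replay the proof of \cref{thm:main} almost verbatim, with the joint Hilbert space $H_{J}$ replaced by the subspace $H_{V}$, and to check that each ingredient survives. Write $\s C_{\s M}^{V} \subseteq \dens(H_{\s M})$ for the set of $\s M$-product states compatible with some pure state $\psi_{V}\in\proj(H_{V})$; equivalently, by the same reformulation that takes \cref{eq:parallel} to \cref{eq:cqmp}, $\s C_{\s M}^{V}$ is the image of the map $\tau_{\s M}^{V}\colon\proj(H_{V})\to\dens(H_{\s M})$ given by $\tau_{\s M}^{V}(\psi_{V}) = \Tr_{mJ\setminus\s M}(\psi_{V}^{\otimes m})$.

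For the ``only if'' direction: if $\rho_{\s M} = \Tr_{mJ\setminus\s M}(\psi_{V}^{\otimes m})$ for a unit vector $\ket{\psi_{V}}\in H_{V}$, then since $\ket{\psi_{V}}^{\otimes nm}\in\symsub^{nm}H_{V}$ the rank-one projector $\psi_{V}^{\otimes nm}$ is dominated by the orthogonal projector onto that subspace, $\psi_{V}^{\otimes nm}\leq\Pi_{V}^{(nm)}$ (this is \cref{prop:copy_sym} with $H_{J}$ replaced by $H_{V}$, the argument being identical). Applying the positive map $\Tr_{mJ\setminus\s M}^{\otimes n}$ to both sides and using $\Tr_{mJ\setminus\s M}^{\otimes n}(\psi_{V}^{\otimes nm}) = \bigl(\Tr_{mJ\setminus\s M}(\psi_{V}^{\otimes m})\bigr)^{\otimes n} = \rho_{\s M}^{\otimes n}$ yields the displayed inequality, exactly as in \cref{eq:nth_disco}.

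For the ``if'' direction I would first establish the subspace analogue of \cref{lem:de_finetti_compatible}. The only new input is the subspace version of \cref{eq:haar_sym}: if $\mu_{V}$ is the $\Unit(H_{V})$-invariant probability measure on $\proj(H_{V})$, then
\begin{equation*}
    \Pi_{V}^{(k)} = \tbinom{k + \dim(H_{V}) - 1}{k}\int_{\proj(H_{V})}\mu_{V}(\diff\psi_{V})\,\psi_{V}^{\otimes k}.
\end{equation*}
This holds for the same reason as \cref{eq:haar_sym}: the right-hand side commutes with the $\Unit(H_{V})$-action on $H_{V}^{\otimes k}$, the symmetric subspace $\symsub^{k}H_{V}$ carries an irreducible representation of $\Unit(H_{V})$ (equivalently of $SU(H_{V})$ — precisely the point invoked for $H_{V} = \antisymsub^{p}\mathbb C^{f}$ or $\symsub^{p}\mathbb C^{f}$ in the paragraph preceding the corollary), so Schur's lemma forces proportionality to $\Pi_{V}^{(k)}$, and $\Tr[\Pi_{V}^{(k)}] = \tbinom{k+\dim(H_{V})-1}{k}$ fixes the constant. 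Pushing $\mu_{V}$ forward through $\tau_{\s M}^{V}$, and using compactness of $\proj(H_{V})$, continuity of $\tau_{\s M}^{V}$ (its entries are degree-$m$ polynomials in those of $\psi_{V}$), and a change of variables, then reproduces \cref{lem:de_finetti_compatible} with $\s C_{\s M}$ replaced by $\s C_{\s M}^{V}$ and $d_{J}$ by $\dim(H_{V})$. Finally, the hypothesis-testing argument of \cref{sec:keyl_exclusive} leading to \cref{eq:incompatibility_measure} uses only that the target set is a compact subset of $\dens(H_{\s M})$ together with the de Finetti-type identity; since $\s C_{\s M}^{V}$ is compact (a continuous image of $\proj(H_{V})$), the construction of the projective measurements $n\mapsto E_{n}$ and the exponential bound go through unchanged, with $\infkeyl{\rho_{\s M}}$ now the rate relative to $\s C_{\s M}^{V}$, vanishing iff $\rho_{\s M}\in\s C_{\s M}^{V}$. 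Combining this with the de Finetti identity as in the proof of \cref{thm:main} — $\tbinom{nm+\dim(H_{V})-1}{nm}$ is still polynomial in $n$ — shows that satisfaction of the inequality for all $n$ forces $\infkeyl{\rho_{\s M}} = 0$, i.e.\ $\rho_{\s M}\in\s C_{\s M}^{V}$.

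The main obstacle is bookkeeping rather than a new idea: one must keep straight that $\Pi_{V}^{(nm)}$ is simultaneously the orthogonal projector onto $\symsub^{nm}H_{V}$ computed inside $H_{V}^{\otimes nm}$ and a bona fide operator on the ambient space $H_{J}^{\otimes nm}$ to which $\Tr_{mJ\setminus\s M}^{\otimes n}$ applies, and one must confirm that the irreducibility underlying the subspace version of \cref{eq:haar_sym} genuinely holds for the subspaces of interest (generic $H_{V}$, and in particular $\antisymsub^{p}\mathbb C^{f}$ and $\symsub^{p}\mathbb C^{f}$). Once these identifications are made, no step of the proof of \cref{thm:main} needs modification, and the choice $H_{V} = H_{J}$ recovers \cref{thm:main} itself.
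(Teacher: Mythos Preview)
Your proposal is correct and is precisely the argument the paper intends: the text preceding the corollary states only that ``the analogue of \cref{eq:haar_sym} holds and thus an analogue of \cref{lem:de_finetti_compatible} also holds,'' so that \cref{thm:main} generalizes. You have filled in exactly those details --- the subspace Haar/de Finetti identity, compactness of $\s C_{\s M}^{V}$, and the unchanged Keyl-divergence hypothesis-testing step --- which is all the paper is asking the reader to supply.
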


\section{Counting Solutions to the QMP}
\label{sec:ortho_sol}

Whenever a given $\s M$-product state, $\rho_{\s M} = \rho_{S_1}\otimes \cdots \otimes \rho_{S_m}$, is shown to be compatible, a natural follow-up problem is to determine whether or not the joint state, $\psi_{J}$, satisfying \cref{eq:cqmp} is unique. For the bipartite marginal scenario, $\s M = (\s A, \s B)$, if the common spectrum of $\rho_{A}$ and $\rho_{B}$ is $s = (s_1, \ldots, s_r, 0, \ldots, 0)$, with positive values distinct, i.e. $s_1 > \cdots > s_r > 0$, then the \textit{unique} pure state, $\psi_{AB} \in \proj(H_J)$, satisfying \cref{eq:cqmp} is $\psi_{AB} = \ket{\psi_{AB}}\bra{\psi_{AB}}$ where
\begin{equation}
    \ket{\psi_{AB}} = \sum_{i=1}^{r} \sqrt{s_i} \ket{\phi_{A}^{(i)}} \otimes \ket{\phi_{B}^{(j)}},
\end{equation}
where $\{\phi_{A}^{(i)}\}_{i=1}^{r}$ and $\{\phi_{B}^{(i)}\}_{i=1}^{r}$ are the eigenvectors of $\rho_{A}$ and $\rho_{B}$. If, however, the common spectrum, $(s_1, \ldots, s_r, 0, \ldots, 0)$, is degenerate in the sense that some of its values are identical, then the solution to \cref{eq:cqmp} may not be unique. A familiar example of this phenomenon, for the two-qubit Hilbert space $H_J \cong \mathbb C^{2} \otimes \mathbb C^{2}$, are the four Bell states all sharing the same pair of maximally-mixed, single-qubit marginals, $(\frac{\ident}{2}, \frac{\ident}{2})$.

The following result generalizes \cref{eq:nth_disco} by considering the possibility that an $\s M$-product may be compatible with multiple, orthogonal, joint states.
\begin{cor}
    Let $\rho_{\s M} = \rho_{S_1} \otimes \cdots \otimes \rho_{S_m}$ be an $\s M$-product state and $\{ \psi_{J}^{(1)}, \ldots, \psi_{J}^{(v)}\}$ be a set of joint pure states, satisfying i) for all $1 \leq j,k \leq v$,
    \begin{equation}
        \Tr(\psi_{J}^{(j)}\psi_{J}^{(k)}) = |\braket{\psi_{J}^{(j)}|{\psi_{J}^{(k)}}}|^{2} = \delta_{j,k},
    \end{equation}
    and ii) for all $1 \leq i \leq m$, and $1 \leq k \leq v$,
    \begin{equation}
        \rho_{S_i} = \Tr_{J\setminus S_i}(\psi_{J}^{(k)}).
    \end{equation}
    Then, the following inequality holds:
    \begin{equation}
        \label{eq:ortho_ineq}
        v^{nm}\rho_{\s M}^{\otimes n} \leq \sum_{\lambda \in \yf_{nm}^{v}}\Tr_{mJ\setminus\s M}^{\otimes n}(\Pi_{J}^{\lambda}).
    \end{equation}
\end{cor}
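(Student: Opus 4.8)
The plan is to run the argument behind \cref{eq:nth_disco} (the ``only if'' half of \cref{thm:main}) one more time, but with the rank-one projector $\psi_J^{\otimes nm}$ replaced by the projector onto a tensor power of the span of the given orthonormal family. Set $H_V \coloneqq \mathrm{span}\{\ket{\psi_J^{(1)}},\dots,\ket{\psi_J^{(v)}}\}\subseteq H_J$; by hypothesis~(i) this is a $v$-dimensional subspace of $H_J$, and precisely because the $\psi_J^{(k)}$ are orthonormal its orthogonal projector is $P_V=\sum_{k=1}^{v}\psi_J^{(k)}$. The operator I would feed into the inequality is $P_V^{\otimes nm}$, the orthogonal projector onto $H_V^{\otimes nm}\subseteq H_J^{\otimes nm}$.

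First I would evaluate $\Tr_{mJ\setminus\s M}^{\otimes n}(P_V^{\otimes nm})$. Since $\Tr_{mJ\setminus\s M}^{\otimes n}$ acts on $n$ consecutive blocks of $m$ tensor factors, $\Tr_{mJ\setminus\s M}=\Tr_{J\setminus S_1}\otimes\cdots\otimes\Tr_{J\setminus S_m}$ acts factorwise inside each block, and $P_V^{\otimes nm}$ is a product across all $nm$ factors, one gets
\[
  \Tr_{mJ\setminus\s M}^{\otimes n}(P_V^{\otimes nm})
  =\Bigl(\Tr_{J\setminus S_1}(P_V)\otimes\cdots\otimes\Tr_{J\setminus S_m}(P_V)\Bigr)^{\otimes n}
  =v^{nm}\rho_{\s M}^{\otimes n},
\]
where I used linearity and hypothesis~(ii): $\Tr_{J\setminus S_i}(P_V)=\sum_{k=1}^{v}\Tr_{J\setminus S_i}(\psi_J^{(k)})=v\,\rho_{S_i}$, and then $\bigotimes_i v\rho_{S_i}=v^{m}\rho_{\s M}$.

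Second I would bound $P_V^{\otimes nm}$ above by the right-hand operator of \cref{eq:ortho_ineq}. Each $T_J(\pi)$, $\pi\in\sym_{nm}$, preserves $H_V^{\otimes nm}$ (it merely permutes the $H_V$-factors) and, being unitary, also its orthogonal complement; hence $P_V^{\otimes nm}$ commutes with every $T_J(\pi)$, and therefore with every isotypic projector $\Pi_J^{\lambda}$. Since summing $\Pi_J^{\lambda}$ over all $\lambda\in\yf_{nm}^{d_J}$ yields the identity on $H_J^{\otimes nm}$, multiplying by $P_V^{\otimes nm}$ expresses $P_V^{\otimes nm}$ as a sum of the mutually orthogonal projectors $\Pi_J^{\lambda}P_V^{\otimes nm}$, each obeying $0\le \Pi_J^{\lambda}P_V^{\otimes nm}\le\Pi_J^{\lambda}$; and Schur--Weyl duality applied to $H_V^{\otimes nm}$ (using $\dim(\schur{\lambda}^{d})>0\iff\ell(\lambda)\le d$) forces $\Pi_J^{\lambda}P_V^{\otimes nm}=0$ whenever $\ell(\lambda)>v$. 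Thus
\[
  P_V^{\otimes nm}=\sum_{\lambda\in\yf_{nm}^{v}}\Pi_J^{\lambda}P_V^{\otimes nm}\ \le\ \sum_{\lambda\in\yf_{nm}^{v}}\Pi_J^{\lambda}.
\]
Applying the positive (indeed completely positive) map $\Tr_{mJ\setminus\s M}^{\otimes n}$ to both sides and inserting the first computation gives exactly \cref{eq:ortho_ineq}.

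The only step that is not pure bookkeeping is the second one: the point is to realize that the right object is the \emph{full} tensor power $P_V^{\otimes nm}$ of the span's projector — rather than, say, the symmetric subspace $\symsub^{nm}H_V$, which would only produce a prefactor $1$ — and then to use Schur--Weyl duality to trim the resolution of the identity down to partitions of length at most $v=\dim H_V$. Everything else (the partial-trace computation, the commuting-projector manipulations, positivity of $\Tr_{mJ\setminus\s M}^{\otimes n}$) is routine. As a sanity check, for $v=1$ one has $P_V=\psi_J^{(1)}$ and $\yf_{nm}^{1}=\{(nm)\}$ with $\Pi_J^{(nm)}$ the symmetric-subspace projector of \cref{defn:symsub}, so \cref{eq:ortho_ineq} collapses to \cref{eq:nth_disco}.
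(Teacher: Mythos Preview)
Your proof is correct and follows essentially the same route as the paper's own proof: define $P_V=\sum_{k=1}^{v}\psi_J^{(k)}$, compute $\Tr_{mJ\setminus\s M}^{\otimes n}(P_V^{\otimes nm})=v^{nm}\rho_{\s M}^{\otimes n}$, use commutation with permutations together with Schur--Weyl duality to obtain $P_V^{\otimes nm}\le\sum_{\lambda\in\yf_{nm}^{v}}\Pi_J^{\lambda}$, and then apply the positive map $\Tr_{mJ\setminus\s M}^{\otimes n}$. The paper's writeup is nearly identical, only slightly more terse.
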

\begin{proof}
    Let $P_V$ be the orthogonal projection operator onto the subspace of $H_J$ spanned by $\{\psi_{J}^{(k)}\}_{k=1}^{v}$, i.e.,
    \begin{equation}
        P_V = \sum_{k=1}^{v} \psi_{J}^{(k)}.
    \end{equation}
    Since each $\psi_{J}^{(k)}$ has marginals $(\rho_{S_1}, \ldots, \rho_{S_m})$, we conclude
    \begin{equation}
        \label{eq:ortho_equality}
        \Tr_{mJ\setminus\s M}^{\otimes n}(P_V^{\otimes nm}) = v^{nm} \rho_{\s M}^{\otimes n}.
    \end{equation}
    Furthermore, $P_V^{\otimes nm}$ commutes with $T_J(\pi)$ for all $\pi \in \sym_{nm}$ and thus commutes with $\Pi_J^{\lambda}$ for every $\lambda \in \yf_{nm}^{d_{J}}$. In fact, we obtain
    \begin{equation}
        P_V^{\otimes nm} = \sum_{\lambda \in \yf_{nm}^{v}} \Pi_{J}^{\lambda}P_V^{\otimes nm} \Pi_{J}^{\lambda} \leq \sum_{\lambda \in \yf_{nm}^{v}}\Pi_{J}^{\lambda},
    \end{equation}
    because i) $\Pi_{J}^{\lambda} P_{V}^{\otimes nm} = 0$ for all $\lambda$ with length $\ell(\lambda) > v$, and ii) $P_V \leq \ident_{J}$. Applying $\Tr_{mJ\setminus \s M}^{\otimes n}$ yields \cref{eq:ortho_ineq}.
\end{proof}
Note that \cref{eq:ortho_ineq} is equivalent to \cref{eq:nth_disco} if $v = 1$. Also note that while \cref{eq:ortho_ineq} is necessary for the existence of $v$ orthogonal solutions to the QMP, satisfying \cref{eq:ortho_ineq} for all $n$ is generally insufficient (for $v > 1$) to conclude that $v$ orthogonal solutions to the QMP exist. For example, if $v = d_{J}$, \cref{eq:ortho_ineq} simplifies to
\begin{equation}
    \rho_{\s M}^{\otimes n} \leq \left(\frac{\ident_{\s M}}{d_{\s M}}\right)^{\otimes n}.
\end{equation}
The above constraint is evidently satisfied for all $n \in \mathbb N$, if and only if $\rho_{\s M}$ is the maximally-mixed $\s M$-product state, i.e., $\rho_{\s M} = \ident_{\s M}/d_{\s M}$. However, such states are generally incompatible~\cite{klyachko2002coherent}, e.g., it can be shown that $(\frac{\ident_{A}}{a}, \frac{\ident_{B}}{b})$ are not the $(A, B)$-marginals of \textit{any} pure state, $\psi_{AB}$, if $a \neq b$.

\section{Keyl Divergence \& State Discrimination}
\label{sec:keyl_exclusive}

The purpose of this section is to prove \cref{thm:constructive_state_discrim} which can be interpreted as an explicit strategy for asymmetric quantum state discrimination. While \cref{thm:constructive_state_discrim} is exclusively used by this paper in the proof of \cref{thm:main}, it may be of independent interest. Many of the results of this section come directly from \citeauthor{keyl2006quantum}'s work on a large-deviation-theoretic approach to quantum state estimation~\cite{keyl2006quantum}. The only additional insight not taken from \cite{keyl2006quantum} is the use of \cref{prop:crit_approx} in the proof of \cref{thm:constructive_state_discrim}. Both \cref{sec:schur_weyl} and \cref{sec:spectra_partitions} are considered prerequisites for this section.

\begin{defn}
    Let $x \in \mathbb R_{\geq 0}^{d;\downarrow}$ and let $\rho \in \dens(\mathbb C^{d})$, define
    \begin{equation}
        \label{eq:gpf}
        \Delta_{x}(\rho) = {\prod}_{i=1}^{d} \lpm_{i}(\rho)^{\fd_i(x)},
    \end{equation}
    where $\lpm_i(\rho)$ is the $i$th (leading) principal minor of $\rho$, i.e. the determinant of the upper-left $i\times i$-submatrix of $\rho$ with respect to some fixed, computational basis $\{ \ket{0}, \ldots, \ket{d}\}$.\footnote{If it happens that $\lpm_i(\rho) = 0$ and $\fd_i(x) = 0$ for some index $i$, then the indeterminant expression $0^0$ is taken to be equal to $1$.}
\end{defn}
The function defined in \cref{eq:gpf} is also referred to as the \textit{generalized power function}~\cite[Notation 4.1]{o2016efficient}. Note however, in \cite{o2016efficient}, $x$ is restricted to be a partition with length at most $d$, while $\rho$ is permitted to be any $d \times d$ complex-valued matrix.

\begin{prop}
    Let $s = (s_1, \ldots, s_d) \in \spectra^d$ be the spectrum of $\sigma \in \dens(\mathbb C^d)$. For all $x = (x_1, \ldots, x_d) \in \mathbb R_{\geq 0}^{k; \downarrow}$,
    \begin{equation}
        \Delta_{x}(\sigma) \leq \Delta_{x}(\diag(s_1, \ldots, s_d)) = {\prod}_{i=1}^{d} s_i^{x_i},
    \end{equation}
    with equality holding if and only if $\sigma = \diag(s_1,\ldots, s_d)$.
\end{prop}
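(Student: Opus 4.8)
The plan is to reduce the inequality to the classical Hadamard–Fischer type inequalities for positive semidefinite matrices. First I would write $x$ in terms of its finite differences: since $x \in \mathbb R_{\geq 0}^{d;\downarrow}$, each $\fd_i(x) \geq 0$, and by definition $\Delta_x(\sigma) = \prod_{i=1}^d \lpm_i(\sigma)^{\fd_i(x)}$. Because all exponents $\fd_i(x)$ are nonnegative, it suffices to prove the \emph{single-index} inequality $\lpm_i(\sigma) \leq \prod_{j=1}^i s_j$ for every $i \in \{1, \ldots, d\}$, where $s_1 \geq \cdots \geq s_d$ are the eigenvalues of $\sigma$; the full claim then follows by multiplying these together with exponents $\fd_i(x) \geq 0$.

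For the single-index inequality, $\lpm_i(\sigma)$ is the determinant of the $i\times i$ principal submatrix $\sigma_{[i]}$ of $\sigma$ obtained by restricting to the first $i$ computational basis vectors, i.e. $\lpm_i(\sigma) = \det(P_i \sigma P_i)$ where $P_i$ is the corresponding rank-$i$ projection. The determinant of a compression of a positive semidefinite operator to an $i$-dimensional subspace is bounded above by the product of the $i$ largest eigenvalues of $\sigma$ — this is a standard consequence of the Cauchy interlacing theorem (the eigenvalues of $\sigma_{[i]}$ interlace those of $\sigma$, so the $j$th largest eigenvalue of $\sigma_{[i]}$ is at most the $j$th largest eigenvalue of $\sigma$), or alternatively of the fact that $\lpm_i(\sigma)$ equals the sum of certain $i\times i$ principal minors of the full spectral decomposition via the Cauchy–Binet formula. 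Multiplying the interlacing bounds over $j = 1, \ldots, i$ gives $\lpm_i(\sigma) \leq s_1 s_2 \cdots s_i$, which is exactly $\lpm_i(\diag(s_1, \ldots, s_d))$.

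For the equality condition, I would argue as follows. If $\sigma = \diag(s_1, \ldots, s_d)$ then every principal minor is a product of a prefix of the $s_j$'s, so $\Delta_x(\sigma) = \prod_i s_i^{x_i}$ and equality clearly holds. Conversely, suppose equality holds in $\Delta_x(\sigma) = \prod_i s_i^{x_i}$. The delicate point is that equality in the \emph{product} over $i$ need not force equality in every factor, since some exponents $\fd_i(x)$ may vanish; one must track which compressions are actually constrained. The cleanest route is to note that the statement as used later (via \cref{prop:hwv_lpm}) is applied with $x$ a partition $\lambda$ having $\fd_i(\lambda) \geq \fd_i(ns) > 0$ at the indices where $s$ is non-degenerate, so in the relevant regime all the inequalities $\lpm_i(\sigma) \leq s_1\cdots s_i$ that matter are tight; tightness of Cauchy interlacing at index $i$ forces the first $i$ basis vectors to span an invariant subspace of $\sigma$ equal to the span of the top-$i$ eigenvectors, and running this for a separating set of indices $i$ pins down $\sigma = \diag(s_1, \ldots, s_d)$. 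I expect the \textbf{main obstacle} to be precisely this equality analysis: handling degeneracies in the spectrum and the vanishing exponents carefully, rather than the upper bound itself, which is a routine application of interlacing.
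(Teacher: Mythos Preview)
Your approach is essentially identical to the paper's: both reduce to the single-index inequality $\lpm_i(\sigma) \leq s_1\cdots s_i$ via Cauchy interlacing and then multiply with nonnegative exponents $\fd_i(x)$. The paper's proof is slightly terser but uses exactly the same interlacing argument you describe.

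On the equality condition you are right that there is a subtlety, and the paper's proof is no more careful than yours: it simply asserts that equality in \cref{eq:maximimal_principal_minor} ``for all $i$'' forces $\sigma=\diag(s_1,\ldots,s_d)$. That claim is correct (one shows inductively that $\lpm_1(\sigma)=s_1$ makes $\ket{0}$ a top eigenvector, hence the first row and column of $\sigma$ are diagonal, then repeats on the remaining block), but the paper does not address the passage from equality in $\Delta_x$ to equality in every $\lpm_i$. As you observe, this passage fails for general $x$ (e.g.\ $x=(1,0,\ldots,0)$), so the equality clause should be read as ``equality for all $x$'' rather than ``for each fixed $x$''. Your instinct to worry about vanishing exponents is well placed; the paper simply does not close that gap either, and for the downstream application (\cref{cor:keyl_div} with $x=s$) one in fact needs the finer observation that $\fd_i(s)=0$ only where $s$ is degenerate, at which indices the constraint $\lpm_i(\sigma)=s_1\cdots s_i$ is recoverable from its neighbours. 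So your flagged ``main obstacle'' is a genuine gap in the proposition as literally stated, shared by the paper, not a defect of your argument.
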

\begin{proof}
    Consider any $i \in \{1, \ldots, d\}$. Let $(s^{(i)}_{1}, \ldots, s^{(i)}_i)$ with $s^{(i)}_1 \geq \cdots \geq s^{(i)}_i$ denote the eigenvalues of the $i\times i$ leading principal submatrix of $\sigma$ so that $\lpm_i(\sigma) = s^{(i)}_1 \cdots s^{(i)}_i$.
    According to Cauchy's interlacing theorem (see \cite{fisk2005very} or \cite[Thm. 4.3.17]{horn1985matrix} noting the reversed ordering of labels), for all $1 < i \leq d$,
    \begin{equation}
        s^{(i)}_{1} \geq s^{(i-1)}_1 \geq s^{(i)}_{2} \geq \cdots \geq s^{(i)}_{i-1} \geq s^{(i-1)}_{i-1} \geq s^{(i)}_{i}.
    \end{equation}
    Therefore, for any $k$ and $i$ such that $1 \leq k \leq i \leq d$, $s_k = s^{(d)}_k \geq s^{(i)}_k \geq s^{(k)}_k$. Therefore, for all $i \in \{1, \ldots, d\}$,
    \begin{equation}
        \label{eq:maximimal_principal_minor}
        \lpm_i(\sigma) \leq s_1 \cdots s_i,
    \end{equation}
    with equality holding (for all $i$) only if $\sigma = \diag(s_1,\ldots, s_d)$.
\end{proof}
\begin{prop}
    \label{prop:shannon}
    Let $s = (s_1, \ldots, s_d) \in \spectra^d \subseteq \mathbb R_{\geq 0}^{d;\downarrow}$. Then
    \begin{equation}
        \Delta_s(\mathrm{diag}(s)) = {\prod}_{i=1}^{d} s_i^{s_i} = \exp(- H(s)) > 0.
    \end{equation}
    where $H(s) = -{\sum}_{i=1}^{d} s_i \ln s_i$ is the Shannon entropy of $s$.
\end{prop}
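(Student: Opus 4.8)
The plan is to deduce this directly from the preceding proposition by specializing its free parameter $x$ to the spectrum $s$ itself. First I would note that $\spectra^d \subseteq \mathbb R_{\geq 0}^{d;\downarrow}$, so the previous proposition applies with $x = s$ (taking $\sigma = \diag(s)$, which trivially realizes the equality case), giving immediately $\Delta_s(\diag(s)) = \prod_{i=1}^{d} s_i^{s_i}$. If one prefers a self-contained derivation, unwind the definition instead: $\Delta_s(\diag(s)) = \prod_{i=1}^{d} \lpm_i(\diag(s))^{\fd_i(s)} = \prod_{i=1}^{d} (s_1\cdots s_i)^{\fd_i(s)}$, and then regroup the exponent of each $s_j$; since $\sum_{i \geq j}\fd_i(s) = (s_j - s_{j+1}) + \cdots + (s_{d-1}-s_d) + s_d = s_j$ telescopes, the product collapses to $\prod_{j=1}^{d} s_j^{s_j}$.

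Next I would rewrite this product in exponential form, $\prod_{i=1}^{d} s_i^{s_i} = \exp\!\big(\textstyle\sum_{i=1}^{d} s_i \ln s_i\big) = \exp(-H(s))$, which is just the definition of the Shannon entropy rearranged. The one point needing care is the treatment of vanishing entries: if $s_k = 0$ then, by the non-increasing ordering, $s_i = 0$ and hence $\fd_i(s) = 0$ for all $i \geq k$, so these factors are $0^0 = 1$ under the convention fixed in the footnote to the definition of $\Delta$ (equivalently, $0\ln 0 = 0$), and they simply drop out of the product.

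Finally, for the strict positivity $\exp(-H(s)) > 0$, it suffices to observe that $H(s)$ is finite: it is a finite sum of terms $s_i\ln s_i$, each finite when $s_i > 0$ and zero when $s_i = 0$, so $H(s) < \infty$; equivalently, the surviving product $\prod_{i:\, s_i > 0} s_i^{s_i}$ is a finite product of strictly positive reals. There is no genuine obstacle in this argument; the only thing demanding a bit of bookkeeping is keeping the $0^0 = 1$ convention consistent between the identity $\Delta_s(\diag(s)) = \prod_i s_i^{s_i}$ and the finiteness of $H(s)$.
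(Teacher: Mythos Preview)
Your argument is correct. The paper does not supply a proof of this proposition at all; it is stated without proof, evidently as an immediate consequence of the preceding proposition (the equality $\Delta_x(\diag(s)) = \prod_i s_i^{x_i}$) specialized to $x = s$, together with the standard $0^0 = 1$ convention already declared in the footnote to the definition of $\Delta_x$. Your write-up makes exactly this specialization and also spells out the telescoping and the handling of zero entries, so it is the same approach, just with the details made explicit.
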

\begin{cor}
    \label{cor:keyl_div}
    Let $\rho \in \dens(\mathbb C^{d})$ have spectrum $s = (s_1, \ldots, s_d) \in \spectra^d$ and let $U \in \Unit(d)$ be a unitary such that $\rho = U \diag(s_1, \ldots, s_d) U^{\dagger}$ and let $\sigma \in \dens(\mathbb C^{d})$. Then
    \begin{equation}
        \frac{\Delta_{s}(U^{\dagger} \sigma U)}{\Delta_{s}(\diag(s))} = \exp(-\keyl{\rho}{\sigma})
    \end{equation}
    where $\keyl{\rho}{\sigma}$ is defined as
    \begin{equation}
        \keyl{\rho}{\sigma} = {\sum}_{i=1}^{d} s_i \ln s_i - \fd_i(s) \ln \lpm_i(U^{\dagger} \sigma U),
    \end{equation}
    where $\keyl{\rho}{\sigma} \in [0, \infty]$ and $\keyl{\rho}{\sigma} = 0$ if and only if $\sigma = \rho$.
\end{cor}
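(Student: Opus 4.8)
The plan is to reduce everything to \cref{prop:shannon}, the (unlabelled) proposition immediately preceding it, and the non-negativity of the classical relative entropy; the only genuinely delicate point will be the equality case when $\rho$ is rank-deficient or has a degenerate spectrum.

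First I would note that the claimed identity is essentially definitional. Unwinding \cref{eq:gpf} gives $\Delta_s(U^\dagger\sigma U)=\prod_{i=1}^d \lpm_i(U^\dagger\sigma U)^{\fd_i(s)}$, while \cref{prop:shannon} identifies the denominator as $\Delta_s(\diag(s))=\prod_{i=1}^d s_i^{s_i}=\exp(-H(s))>0$, so the quotient is well defined and
\begin{equation}
\ln\frac{\Delta_s(U^\dagger\sigma U)}{\Delta_s(\diag(s))}=\sum_{i=1}^d\fd_i(s)\ln\lpm_i(U^\dagger\sigma U)-\sum_{i=1}^d s_i\ln s_i=-\keyl{\rho}{\sigma},
\end{equation}
which is exactly the asserted formula; the convention $0^0=1$ is precisely what makes the $i$-th factor trivial whenever $\fd_i(s)=0$.

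Next I would establish $\keyl{\rho}{\sigma}\in[0,\infty]$ by a two-step estimate. Writing $t\in\spectra^d$ for the spectrum of $\sigma$ (equivalently of $U^\dagger\sigma U$), the proposition immediately preceding \cref{prop:shannon} gives $\Delta_s(U^\dagger\sigma U)\le\Delta_s(\diag(t))=\prod_i t_i^{s_i}$, and then $\prod_i t_i^{s_i}\le\prod_i s_i^{s_i}$ since $\kl{s}{t}=\sum_i s_i\ln(s_i/t_i)\ge 0$. Concatenating yields $\Delta_s(U^\dagger\sigma U)\le\Delta_s(\diag(s))$, i.e. $\keyl{\rho}{\sigma}\ge 0$; moreover, if some $\lpm_i(U^\dagger\sigma U)$ with $\fd_i(s)>0$ vanishes (equivalently $\kl{s}{t}=+\infty$ because $\sigma$ is singular where $\rho$ is not), then $\Delta_s(U^\dagger\sigma U)=0$ and $\keyl{\rho}{\sigma}=+\infty$.

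Finally, the equality clause. The ``if'' direction is immediate: $\sigma=\rho$ makes $U^\dagger\sigma U=\diag(s)$ and $t=s$, so both inequalities above are saturated and $\keyl{\rho}{\sigma}=0$. For the converse, $\keyl{\rho}{\sigma}=0$ forces both inequalities to be equalities: $\kl{s}{t}=0$ gives $t=s$, after which I would hope to invoke the equality clause of the preceding proposition to get $U^\dagger\sigma U=\diag(t)=\diag(s)$, hence $\sigma=\rho$. The hard part is exactly this last step when $\rho$ has rank $r<d$ or a degenerate spectrum: then $\fd_i(s)=0$ for $i>r$, so that equality clause is vacuous on the lower-right block. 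To close the gap I would use that $\fd_r(s)=s_r>0$ still forces $\lpm_r(U^\dagger\sigma U)=s_1\cdots s_r$, invoke Cauchy's interlacing theorem --- exactly as in the proof of the preceding proposition --- to conclude that the leading $r\times r$ block of $U^\dagger\sigma U$ has eigenvalues $s_1,\dots,s_r$; since these already account for the whole trace $\abs{s}=1=\Tr(U^\dagger\sigma U)$, the remaining diagonal entries (and hence rows and columns) of $U^\dagger\sigma U$ vanish, and repeating the interlacing/extremality argument within each constant block of $s$ pins the leading block down to $\diag(s_1,\dots,s_r)$, giving $U^\dagger\sigma U=\diag(s)$ as required. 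This degenerate equality case is the one place where more than bookkeeping with $\fd$, $\gamma$, and the two previous propositions is needed.
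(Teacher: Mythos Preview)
The paper gives no proof of this corollary at all; it is simply stated as following from the two preceding propositions, so there is nothing to compare your argument against beyond the implicit claim that it is immediate. Your derivation of the identity and of the bound $\keyl{\rho}{\sigma}\in[0,\infty]$ via $\Delta_s(U^\dagger\sigma U)\le\Delta_s(\diag(t))\le\Delta_s(\diag(s))$ is exactly what that labeling suggests, and it is correct.

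Where you go beyond the paper is in the equality clause, and you are right to do so. The preceding proposition's equality statement, read literally for a fixed $x$, is not quite strong enough when some $\fd_i(x)=0$: for instance with $x=(1,\dots,1)$ one only recovers $\det\sigma=\prod_i s_i$. Your observation that $\fd_r(s)=s_r>0$ at the rank of $\rho$ forces $\lpm_r(U^\dagger\sigma U)=s_1\cdots s_r$, that Cauchy interlacing plus this product identity pins down the eigenvalues of the leading $r\times r$ block, that the trace constraint then kills the remaining rows and columns, and that iterating the same extremality argument across each constant block of $s$ forces $U^\dagger\sigma U=\diag(s)$, is a clean and correct way to close the gap. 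The paper does not address this; your treatment is more careful than the source.
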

Notice that if $\rho$ and $\sigma$ are simultaneously diagonalized by $U$ so that $U^{\dagger} \sigma U = \diag(t_1, \ldots, t_d)$, the quantity $\keyl{\rho}{\sigma}$ simplifies to the classical relative entropy, $\kl{s}{t} = \sum_{i=1}^{d} s_i (\ln s_i - \ln t_i)$,
also known as Kullback-Liebler divergence~\cite{kullback1997information}. Also note that, in general, $\keyl{\rho}{\sigma}$ does not equal the quantum relative entropy $\qrl{\rho}{\sigma} = \Tr(\rho (\ln \rho - \ln \sigma))$, but is nevertheless bounded by $\keyl{\rho}{\sigma} \leq \qrl{\rho}{\sigma}$~\cite{keyl2006quantum}. For these reasons, we refer to the quantity $\keyl{\rho}{\sigma}$ as \textit{Keyl-divergence}.

\begin{prop}
    \label{prop:hwv_lpm}
    Let $\lambda \in \yf_n^d$ be a partition of $n \in \mathbb N$ and let $\ket{\phi_\lambda^U} \in \schur{\lambda}^d$ be the twirled highest weight vector for $U \in \Unit(d)$ (see \cref{defn:twirled}). Then for all $\sigma \in \dens(\mathbb C^d)$,
    \begin{equation}
        \bra{\phi^U_\lambda} \tau_{\lambda}(\sigma^{\otimes \abs{\lambda}}) \ket{\phi^U_\lambda} = \Delta_{\lambda}(U^{\dagger}\sigma U).
    \end{equation}
\end{prop}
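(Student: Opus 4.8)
The plan is to split the statement into three pieces: (i) identify the operator $\tau_\lambda(\sigma^{\otimes n})$, where $n=\abs\lambda$, with the Schur--Weyl image $\lilschur\lambda(\sigma)$ of $\sigma$; (ii) use the twirl to collapse the left-hand side to a single highest-weight matrix coefficient of $\lilschur\lambda$; and (iii) recognize that matrix coefficient as the generalized power function of \cref{eq:gpf} via the Gauss (LU) decomposition. A recurring device is that every map in sight is polynomial in the matrix entries, so identities established on a dense set extend everywhere. For (i), fix an invertible $\sigma\in\GL(d)$: by Schur--Weyl duality the operator $\sigma^{\otimes n}$ acts on the $\lambda$-isotypic subspace $\iota_\lambda(\specht\lambda\otimes\schur\lambda^d)$ of \cref{eq:maschkes_theorem} as $\ident_{\specht\lambda}\otimes\lilschur\lambda(\sigma)$, and since $\sigma^{\otimes n}$ is manifestly $\sym_n$-invariant, \cref{prop:decomp_sym_operator} applies and yields
\begin{equation}
    \tau_\lambda(\sigma^{\otimes n})=\frac{\Tr_{\specht\lambda}(\ident_{\specht\lambda}\otimes\lilschur\lambda(\sigma))}{\dim(\specht\lambda)}=\lilschur\lambda(\sigma).
\end{equation}
Because $\lilschur\lambda$ is a polynomial representation of $\GL(d)$ (the parts of $\lambda$ being non-negative), it extends to a polynomial, multiplicative map on all of $\s L(\mathbb C^d)$ satisfying $\lilschur\lambda(M)^\dagger=\lilschur\lambda(M^\dagger)$ — take adjoints of $M^{\otimes n}=\ident_{\specht\lambda}\otimes\lilschur\lambda(M)$ on the $\lambda$-isotypic block, using $(M^{\otimes n})^\dagger=(M^\dagger)^{\otimes n}$. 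Matching this against the polynomial map $\sigma\mapsto\tau_\lambda(\sigma^{\otimes n})$ shows $\tau_\lambda(\sigma^{\otimes n})=\lilschur\lambda(\sigma)$ for every $\sigma\in\dens(\mathbb C^d)$.

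For (ii), by \cref{defn:twirled} we have $\ket{\phi_\lambda^U}=\lilschur\lambda(U)\ket{\phi_\lambda}$, and $\lilschur\lambda$ restricts to a unitary representation of $\Unit(d)$ (since $\schur\lambda^d$ inherits the inner product of $(\mathbb C^d)^{\otimes n}$, on which $\Unit(d)$ acts by $U^{\otimes n}$), so writing $M\coloneqq U^\dagger\sigma U$,
\begin{equation}
    \bra{\phi_\lambda^U}\tau_\lambda(\sigma^{\otimes\abs\lambda})\ket{\phi_\lambda^U}=\bra{\phi_\lambda}\lilschur\lambda(U)^\dagger\lilschur\lambda(\sigma)\lilschur\lambda(U)\ket{\phi_\lambda}=\bra{\phi_\lambda}\lilschur\lambda(M)\ket{\phi_\lambda}.
\end{equation}
For (iii), it remains to prove $\bra{\phi_\lambda}\lilschur\lambda(M)\ket{\phi_\lambda}=\Delta_\lambda(M)$ for every $M$; both sides are polynomials in the entries of $M$ (recall $\fd_i(\lambda)\geq0$), so it suffices to treat $M$ with all leading principal minors non-zero, i.e.\ $M=LDN$ with $L$ lower unitriangular, $D=\diag(u_1,\ldots,u_d)$, and $N$ upper unitriangular. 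The defining eigenvector property gives $\lilschur\lambda(D)\ket{\phi_\lambda}=(\prod_{i} u_i^{\lambda_i})\ket{\phi_\lambda}$; being a highest-weight vector for the upper-triangular Borel, $\ket{\phi_\lambda}$ is fixed by the upper unitriangular subgroup, so $\lilschur\lambda(N)\ket{\phi_\lambda}=\ket{\phi_\lambda}$, and since $L^\dagger$ is upper unitriangular, $\lilschur\lambda(L)^\dagger\ket{\phi_\lambda}=\lilschur\lambda(L^\dagger)\ket{\phi_\lambda}=\ket{\phi_\lambda}$. Taking $\ket{\phi_\lambda}$ of unit norm, these collapse the matrix coefficient to $\prod_i u_i^{\lambda_i}$. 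On the other side, since $L$ and $N$ are unitriangular the upper-left $i\times i$ block of $M=LDN$ is the product of the corresponding blocks, hence $\lpm_i(M)=u_1\cdots u_i$; inserting this into \cref{eq:gpf} and telescoping the exponents gives $\Delta_\lambda(M)=\prod_{i=1}^d(u_1\cdots u_i)^{\fd_i(\lambda)}=\prod_{j=1}^d u_j^{\lambda_j}$, the same number. Specializing $M=U^\dagger\sigma U$ completes the proof.

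The genuine obstacle is the group-theoretic kernel of step (iii): one must use precisely that a highest-weight vector is annihilated by the raising operators, equivalently fixed by the unipotent radical of the Borel, since this is exactly what converts the three-factor Gauss decomposition into a single product of diagonal entries and makes the matrix coefficient computable. The secondary technical point is the polynomial-continuity argument in step (i) that carries the identity $\tau_\lambda(\sigma^{\otimes n})=\lilschur\lambda(\sigma)$ — proved cleanly only for invertible $\sigma$ — to all density operators, and likewise the reduction of the matrix-coefficient identity to the dense Gauss-decomposable locus.
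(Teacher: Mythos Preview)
Your proof is correct. The paper does not actually give a proof of this proposition---it simply defers to Keyl and to Zhelobenko's textbook---so your self-contained argument via Schur--Weyl duality and the Gauss (LDU) decomposition is more than the paper offers, but it is precisely the classical highest-weight computation those references contain, and the approaches are therefore aligned.
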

\begin{proof}
    This result is noted by \citeauthor{keyl2006quantum} as \cite[Eqs. (141) \& (151)]{keyl2006quantum} with reference to \cite[Section 49]{zhelobenko1973compact}.
\end{proof}
Henceforth, define the projection operator $\Phi_{\lambda}^{U} \in \s L(H^{\otimes \abs{\lambda}})$ by
\begin{equation}
    \Phi_{\lambda}^{U} \coloneqq \iota_{\lambda} (\ket{\phi_{\lambda}^{U}} \bra{\phi^{U}_{\lambda}} \otimes \ident_{\specht{\lambda}}) \iota_{\lambda}^{\dagger},
\end{equation}
so that,
\begin{equation}
    \label{eq:defn_of_big_phi}
    \Tr(\Phi_{\lambda}^{U} \rho^{\otimes \abs{\lambda}}) = \dim(\specht{\lambda})\Delta_{\lambda}(U^{\dagger} \rho U).
\end{equation}
\begin{cor}
    \label{cor:rational_state_discrim}
    Let $\rho, \sigma \in \dens(\mathbb C^d)$ and let $U \in \Unit(d)$ diagonalize $\rho$, i.e. $\rho = U\diag(s_1, \ldots, s_d)U^{\dagger}$. If $\rho$ has \textit{rational} spectra $s = (s_1, \ldots, s_d) \in \spectra^{d}$, i.e. there exists a $q \in \mathbb N$ such that $qs \in \yf_{q}^{d}$ is a partition of $q$, then for all $n \in \mathbb N$,
    \begin{equation}
        \label{eq:rational_state_discrim}
        \frac{\Tr(\Phi^{U}_{nqs} \sigma^{\otimes nq})}{\Tr(\Phi^{U}_{nqs}\rho^{\otimes nq})} = \exp (-nq\keyl{\rho}{\sigma}).
    \end{equation}
\end{cor}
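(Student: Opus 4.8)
The plan is to feed $\lambda = nqs$ into the identity \cref{eq:defn_of_big_phi}, cancel the common Schur-dimension factor, and then exploit the fact that the generalized power function $\Delta_x$ is ``homogeneous'' in its index $x$, reducing the claim to \cref{cor:keyl_div}. First I would check that all objects in \cref{eq:rational_state_discrim} make sense: since $qs \in \yf_q^d$ is a partition of $q$ of length at most $d$, the scaled tuple $nqs = n\cdot(qs)$ has non-negative integer entries, is non-increasing, has the same support (hence length at most $d$), and sums to $nq$; thus $nqs \in \yf_{nq}^d$, $\abs{nqs} = nq$, and $\Phi_{nqs}^U$, $\tau_{nqs}$, $\dim(\specht{nqs})$ are all legitimate.

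Next, applying \cref{eq:defn_of_big_phi} with $\lambda = nqs$ to both $\rho$ and $\sigma$, the factor $\dim(\specht{nqs})$ cancels in the ratio, and using $U^\dagger \rho U = \diag(s_1,\dots,s_d)$ we get
\[
\frac{\Tr(\Phi_{nqs}^U \sigma^{\otimes nq})}{\Tr(\Phi_{nqs}^U \rho^{\otimes nq})} = \frac{\Delta_{nqs}(U^\dagger \sigma U)}{\Delta_{nqs}(\diag(s))}.
\]
Then I would invoke homogeneity: because $\fd$ is linear, $\fd_i(nqs) = nq\,\fd_i(s)$ for every $i$, so straight from the definition \cref{eq:gpf},
\[
\Delta_{nqs}(M) = \prod_{i=1}^d \lpm_i(M)^{nq\,\fd_i(s)} = \Bigl(\prod_{i=1}^d \lpm_i(M)^{\fd_i(s)}\Bigr)^{nq} = \Delta_s(M)^{nq}
\]
for any positive-semidefinite $M$ (the $0^0 := 1$ convention is carried along consistently, since $\fd_i(s) = 0 \iff nq\,\fd_i(s) = 0$, and $(\prod a_i)^{nq} = \prod a_i^{nq}$ for $a_i \geq 0$). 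Applying this to $M = U^\dagger \sigma U$ and $M = \diag(s)$, and then invoking \cref{cor:keyl_div} (whose hypotheses hold verbatim, as $s$ is the spectrum of $\rho$ and $U$ diagonalizes it), yields
\[
\frac{\Delta_{nqs}(U^\dagger \sigma U)}{\Delta_{nqs}(\diag(s))} = \left(\frac{\Delta_s(U^\dagger \sigma U)}{\Delta_s(\diag(s))}\right)^{nq} = \exp(-\keyl{\rho}{\sigma})^{nq} = \exp(-nq\,\keyl{\rho}{\sigma}),
\]
which is \cref{eq:rational_state_discrim}.

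Since this is a corollary of \cref{eq:defn_of_big_phi}, \cref{prop:shannon}, and \cref{cor:keyl_div}, I do not expect a genuine obstacle; the only points needing care are bookkeeping. One is confirming $nqs$ is truly an integer partition so the Schur--Weyl objects indexed by it exist (handled above via $n \in \mathbb N$ and rationality of the spectrum). Another is legitimacy of the division: the denominator $\Delta_s(\diag(s)) = \exp(-H(s))$ is strictly positive by \cref{prop:shannon}, hence $\Delta_{nqs}(\diag(s)) = \exp(-nqH(s)) > 0$ and no indeterminate $0/0$ arises. The last is the degenerate case $\keyl{\rho}{\sigma} = \infty$, i.e. some leading principal minor $\lpm_i(U^\dagger \sigma U)$ vanishes while $\fd_i(s) > 0$: then the numerator $\Delta_{nqs}(U^\dagger \sigma U)$ contains a factor $0^{nq\fd_i(s)} = 0$, so both sides of \cref{eq:rational_state_discrim} equal $0$ and the identity still holds.
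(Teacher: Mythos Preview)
Your proof is correct and follows essentially the same route as the paper's own argument: apply \cref{eq:defn_of_big_phi} to numerator and denominator, cancel the common factor $\dim(\specht{nqs})$, and reduce to \cref{cor:keyl_div}. The paper's proof is terser and leaves the homogeneity of $\Delta_x$ in $x$ and the edge cases (positivity of the denominator, the $\keyl{\rho}{\sigma}=\infty$ situation) implicit, whereas you spell them out, but the underlying argument is the same.
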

\begin{proof}
    The proof follows from \cref{eq:defn_of_big_phi} and \cref{cor:keyl_div}. When \cref{eq:defn_of_big_phi} is applied to the numerator and denominator on the left-hand-side of \cref{eq:rational_state_discrim}, the common factor of $\dim(\specht{nqs}) > 0$ cancels out.
\end{proof}
A result similar to \cref{cor:rational_state_discrim} holds for arbitrary states $\rho \in \dens(\mathbb C^d)$, e.g., for states that do not have rational spectra.
\begin{thm}
    \label{thm:constructive_state_discrim}
    Let $\rho, \sigma \in \dens(\mathbb C^d)$, let $s = (s_1, \ldots, s_d) \in \spectra^{d}$ be the spectrum of $\rho$, and let $U \in \Unit(d)$ diagonalize $\rho$, such that $\rho = U\diag(s_1, \ldots, s_d)U^{\dagger}$.
    Then there exists a sequence, $n \mapsto \lambda^{n} \in \yf_{n}^{d}$ of partitions such that for all $n \in \mathbb N$,
    \begin{equation}
        \label{eq:constructive_state_discrim}
        \frac{\Tr(\Phi^{U}_{\lambda^{n}}\sigma^{\otimes n})}{\Tr(\Phi^{U}_{\lambda^{n}}\rho^{\otimes n})} \leq D(s)\exp (-(n-\tbinom{d+1}{2}+1) \keyl{\rho}{\sigma}).
    \end{equation}
    where $D(s)$ is a constant depending only on $s$.
\end{thm}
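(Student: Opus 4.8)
The plan is to build the sequence $\lambda^{n}$ from the approximation scheme of \cref{prop:crit_approx} and then reduce \cref{eq:constructive_state_discrim} to \cref{cor:keyl_div}, carefully tracking the rounding errors. For $n \geq \tbinom{d+1}{2}$, set $m \coloneqq n - \tbinom{d+1}{2} + 1 \geq 1$, let $\lambda$ be the partition that \cref{prop:crit_approx} produces from $s$ and $m$ (so $\fd_i(\lambda) = \lceil m\,\fd_i(s)\rceil$ and $m \leq \abs{\lambda} \leq n$), and let $\lambda^{n}$ be obtained from $\lambda$ by increasing its first part by $n - \abs{\lambda} \geq 0$, which keeps it a partition with at most $d$ parts, so $\lambda^{n} \in \yf_{n}^{d}$. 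For the finitely many remaining $n < \tbinom{d+1}{2}$ the claimed inequality is vacuous, since its exponent $-(n - \tbinom{d+1}{2}+1)\keyl{\rho}{\sigma}$ is non-negative; there one may simply take $\lambda^{n} = (n, 0, \dots, 0)$, compute the ratio via \cref{eq:defn_of_big_phi} as $(\lpm_1(U^{\dagger}\sigma U)/s_1)^{n} \leq s_1^{-n}$, and absorb it into the constant.

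For $n \geq \tbinom{d+1}{2}$, write $q_i \coloneqq \lpm_i(\diag(s)) = s_1\cdots s_i$ and $p_i \coloneqq \lpm_i(U^{\dagger}\sigma U)$. Since $U^{\dagger}\rho U = \diag(s)$, \cref{eq:defn_of_big_phi} (which follows from \cref{prop:hwv_lpm}) gives $\Tr(\Phi^{U}_{\lambda^{n}}\rho^{\otimes n}) = \dim(\specht{\lambda^{n}})\,\Delta_{\lambda^{n}}(\diag(s))$ and $\Tr(\Phi^{U}_{\lambda^{n}}\sigma^{\otimes n}) = \dim(\specht{\lambda^{n}})\,\Delta_{\lambda^{n}}(U^{\dagger}\sigma U)$, so the factor $\dim(\specht{\lambda^{n}})$ cancels and
\begin{equation*}
\frac{\Tr(\Phi^{U}_{\lambda^{n}}\sigma^{\otimes n})}{\Tr(\Phi^{U}_{\lambda^{n}}\rho^{\otimes n})} = \prod_{i=1}^{d}\left(\frac{p_i}{q_i}\right)^{\fd_i(\lambda^{n})} .
\end{equation*}
By construction $\fd_i(\lambda^{n}) = m\,\fd_i(s) + \eta_i$ with $0 \leq \eta_i < \tbinom{d+1}{2}$, and $\eta_i = 0$ whenever $\fd_i(s) = 0$ and $i \geq 2$; so the only potentially singular index is $i = 1$, where $q_1 = s_1 > 0$ regardless, and every other factor with a vanishing minor is simply $1$ by the $0^{0}$ convention of \cref{sec:spectra_partitions}.

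Taking logarithms and splitting along this decomposition, the ``bulk'' contribution is $m\sum_i \fd_i(s)\ln(p_i/q_i)$; using $\sum_i \fd_i(s)\ln q_i = \sum_i s_i \ln s_i$, \cref{cor:keyl_div} identifies $\sum_i \fd_i(s)\ln(p_i/q_i) = \ln\Delta_s(U^{\dagger}\sigma U) - \ln\Delta_s(\diag(s)) = -\keyl{\rho}{\sigma}$, so the bulk equals exactly $-(n - \tbinom{d+1}{2}+1)\keyl{\rho}{\sigma}$. For the ``error'' contribution $\sum_i \eta_i\ln(p_i/q_i)$: a leading principal minor of a density matrix satisfies $p_i \leq 1$ (Cauchy interlacing, as in the displayed bound $\Delta_x(\sigma)\leq\Delta_x(\diag(\spec(\sigma)))$ proved above), hence $\ln(p_i/q_i) \leq -\ln q_i$, so every surviving summand is at most $\eta_i(-\ln q_i) \leq \tbinom{d+1}{2}(-\ln q_i)$, while summands with $p_i = 0$ only push the ratio to $0$. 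Summing gives $\sum_i \eta_i\ln(p_i/q_i) \leq \ln D(s)$ with, e.g., $D(s) = \prod_{i:\, s_i > 0}(s_1\cdots s_i)^{-\tbinom{d+1}{2}} \geq 1$, which also dominates the small-$n$ constant. Exponentiating yields \cref{eq:constructive_state_discrim}.

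I expect the error contribution to be the main obstacle, precisely because $U$ diagonalizes $\rho$ but not $\sigma$: $\Delta_{\lambda^{n}}(U^{\dagger}\sigma U)$ is a genuine product of leading principal minors of a non-diagonal matrix, and there is no termwise inequality $p_i \leq q_i$; only the $\fd_i(s)$-weighted combination $\sum_i \fd_i(s)\ln(q_i/p_i) = \keyl{\rho}{\sigma}$ is pinned down. The point of the decomposition $\fd_i(\lambda^{n}) = m\,\fd_i(s) + \eta_i$ with $\eta_i$ uniformly bounded — the content of \cref{prop:crit_approx} together with the first-part padding — is exactly to confine the uncontrolled ratios $p_i/q_i$ to bounded powers so that they can be absorbed into $D(s)$. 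The rest is bookkeeping with the partial-sum/difference identities of \cref{sec:spectra_partitions} and the conventions fixed there.
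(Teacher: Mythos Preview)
Your proof is correct and takes essentially the same approach as the paper: use \cref{prop:crit_approx} to approximate the spectrum by a partition, pad the first part to reach size $n$, and control the resulting ratio via the principal-minor formula and the bound $p_i \leq 1$. The only differences are organizational --- you fix $m = n - \tbinom{d+1}{2}+1$ directly and decompose $\fd_i(\lambda^n) = m\,\fd_i(s) + \eta_i$ in one step, whereas the paper parametrizes by an auxiliary $k$ and bounds $\Delta_{\mu^k}$ from above and below separately --- and the resulting constants $D(s)$ differ only in detail.
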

\begin{proof}
    The proof relies on an explicit construction of a sequence, $n \mapsto \lambda^{n}$, that satisfies the claim.
    For each non-negative integer $k \in \mathbb N_{\geq 0}$, let $\mu^{k}$ be the partition characterized by $\delta_i(\mu^k) = \lceil\fd_i(ks)\rceil$ (see \cref{prop:crit_approx}). Then for any state $\eta \in \dens(\mathbb C^d)$, we claim
    \begin{equation}
        \label{eq:eta_bounds}
        \Delta_{ks}(\eta) \Delta_{\mu^1}(\eta) \leq \Delta_{\mu^k}(\eta) \leq \Delta_{ks}(\eta).
    \end{equation}
    To see the upper bound, note that for all $i \in \{1, \ldots, d\}$, $\fd_i(\mu^k) \geq \fd_i(k s) > 0$ so $\lpm_i(\eta)^{\fd_{i}(\mu^k)} \leq \lpm_i(\eta)^{\fd_{i}(ks)}$ since $\lpm_i(\eta) < 1$ (noting \cref{eq:maximimal_principal_minor}).
    For the lower bound, note that $\fd_i(\mu^k) = \lceil \fd_i(ks) \rceil \leq \fd_i(ks) + \lceil \fd_i(s) \rceil = \fd_i(ks) + \fd_i(\mu^1)$, so $\Delta_{\mu^k}(\eta) \geq \Delta_{ks}(\eta) \Delta_{\mu^1}(\eta)$ holds.
    Next, apply \cref{cor:keyl_div} and \cref{eq:eta_bounds} (the upper bound when $\eta = U^{\dagger} \sigma U$, and the lower bound when $\eta = U^{\dagger} \rho U = \diag(s_1, \ldots s_d)$) to obtain
    \begin{equation}
        \label{eq:constructive_state_discrim_k}
        \frac{\Tr(\Phi_{\mu^{k}}^{U}\sigma^{\otimes \abs{\mu^k}})}{\Tr(\Phi_{\mu^{k}}^{U}\rho^{\otimes \abs{\mu^k}})} \leq \frac{\exp(-k \keyl{\rho}{\sigma})}{\Delta_{\mu^{1}}(\diag(s_1, \ldots s_d))}.
    \end{equation}
    Now, notice that \cref{eq:constructive_state_discrim_k} is almost in the form of \cref{eq:constructive_state_discrim}.
    The main obstacle remaining is simply that the size of $\mu^{k}$ needs to be decoupled from the spectra of $\rho$.
    Fortunately, \cref{prop:crit_approx} guarantees that $\abs{\mu^{k}}$ is, at least, approximately equal to $k$ because $k \leq \abs{\mu^{k}} \leq k + \tbinom{d+1}{2} - 1$. Moreover, since $\abs{\mu^{k+1}} \geq \abs{\mu^{k}}$, there always exists at least one value of $k$ such that $\mu^{k}$ has size approximately $n$ for any $n \in \mathbb N$; specifically, there exists a $k \in \mathbb N$ such that
    \begin{equation}
        \label{eq:optimal_k}
        n - \tbinom{d+1}{2} + 1 \leq \abs{\mu^{k}} \leq n.
    \end{equation}
    Now simply define $\lambda^{n} \in \yf_{n}^{d}$ by
    \begin{equation}
        \lambda^{n} = (\mu^{k}_1 + n - \abs{\mu^{k}}, \mu^{k}_{2}, \ldots, \mu_{d}^{k}),
    \end{equation}
    where $k$ is the largest such that $\mu_k$ satisfies \cref{eq:optimal_k}. Note that when $n$ is small ($n < \tbinom{d+1}{2} - 1$), is entirely possible for $k = 0$ and $\mu_0 = (0,0,\ldots,0)$, in which case, $\lambda^n = (n, 0, \ldots, 0)$.
    This definition ensures
    \begin{align}
        \Tr(\Phi^{U}_{\lambda^{n}} \rho^{\otimes n}) &= s_1^{n-\abs{\mu^{k}}} \Tr(\Phi^{U}_{\mu^k} \rho^{\otimes \abs{\mu^{k}}}), \\
        \Tr(\Phi^{U}_{\lambda^{n}} \sigma^{\otimes n}) &\leq \Tr(\Phi^{U}_{\mu^k} \sigma^{\otimes \abs{\mu^{k}}}).
    \end{align}
    Therefore, from \cref{eq:constructive_state_discrim_k,eq:optimal_k}, we conclude \cref{eq:constructive_state_discrim} where $D(s)$ is the constant
    \begin{align}
        D(s)
        &= s_1^{1-\tbinom{d+1}{2}} \left(\Delta_{\mu^{1}}(\diag(s_1, \ldots s_d))\right)^{-1} \\
        &= s_1^{1-\tbinom{d+1}{2}}\prod_{i=1}^{d} (s_1s_2 \cdots s_i)^{-\lceil \fd_i(s) \rceil}.
    \end{align}
\end{proof}
\begin{rem}
    \label{rem:state_discrim}
    To derive the inequality claimed in \cref{sec:necc_suff}, namely \cref{eq:incompatibility_measure}, from the result of \cref{thm:constructive_state_discrim}, note that $\tbinom{d+1}{2}-1 \leq d^2$ and substitute
    \begin{enumerate}[i)]
        \item $E_n = \Phi_{\lambda^{n}}^{U}$,
        \item $\infkeyl{\rho_{\s M}} = \inf_{\sigma_{\s M} \in \s C_{\s M}} \keyl{\rho_{\s M}}{\sigma_{\s M}}$,
        \item $c(\rho_{\s M}) = \ln D(\mathrm{spec}(\rho_{\s M}))$, and
        \item $d = d_{\s M} = \dim(H_{\s M})$.
    \end{enumerate}
    The claim that $\infkeyl{\rho_{\s M}}$ vanishes if and only if $\rho_{\s M} \in \s C_{\s M}$ follows from the compactness of $\s C_{\s M}$ and the following corollary.
\end{rem}
\begin{cor}
    \label{cor:keyl_div_from_R}
    Let $\s C \subseteq \dens(\mathbb C^d)$ be compact. Define
    \begin{equation}
        \infkeyl{\rho} \coloneqq \inf_{\sigma \in \s C}\keyl{\rho}{\sigma}.
    \end{equation}
    Then $\infkeyl{\rho} = 0$ if and only if $\rho \in \s C$.
\end{cor}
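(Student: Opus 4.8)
The plan is to treat the two implications separately; the reverse implication $\infkeyl{\rho}=0 \Rightarrow \rho\in\s C$ is the only one that requires work. For the forward implication, if $\rho\in\s C$ then choosing $\sigma=\rho$ in the infimum defining $\infkeyl{\rho}$ and invoking $\keyl{\rho}{\rho}=0$ from \cref{cor:keyl_div}, together with the nonnegativity $\keyl{\rho}{\sigma}\ge 0$ asserted there, immediately gives $\infkeyl{\rho}=0$.

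For the converse I would argue the contrapositive: assuming $\rho\notin\s C$, show $\infkeyl{\rho}>0$. The crux is that $\sigma\mapsto\keyl{\rho}{\sigma}$ is lower semicontinuous on $\dens(\mathbb C^d)$, with the understanding that it may take the value $+\infty$. Fixing a $U\in\Unit(d)$ with $\rho=U\diag(s_1,\ldots,s_d)U^\dagger$, \cref{cor:keyl_div} gives $\keyl{\rho}{\sigma}=\sum_{i=1}^d\big(s_i\ln s_i-\fd_i(s)\ln\lpm_i(U^\dagger\sigma U)\big)$; each map $\sigma\mapsto\lpm_i(U^\dagger\sigma U)$ is a polynomial in the matrix entries of $\sigma$, hence continuous, and takes values in $[0,1]$ (a leading principal minor of a density operator; cf.\ \cref{eq:maximimal_principal_minor}), while $t\mapsto-\ln t$ is lower semicontinuous on $[0,1]$ once we set $-\ln 0:=+\infty$. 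Since a lower semicontinuous function composed with a continuous map is lower semicontinuous, and a finite sum of nonnegative scalar multiples of such functions (shifted by a constant) is again lower semicontinuous, the claim follows. A lower semicontinuous extended-real function on the compact set $\s C$ attains its infimum, so $\infkeyl{\rho}=\keyl{\rho}{\sigma^\star}$ for some $\sigma^\star\in\s C$; as $\rho\notin\s C$ forces $\sigma^\star\neq\rho$, \cref{cor:keyl_div} yields $\keyl{\rho}{\sigma^\star}>0$, and hence $\infkeyl{\rho}>0$.

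The one delicate point — and the reason lower semicontinuity rather than continuity is the correct notion, and why $\keyl{\rho}{\sigma}$ is permitted to equal $+\infty$ — is the behavior at states $\sigma$ for which some $\lpm_i(U^\dagger\sigma U)$ vanishes while $\fd_i(s)>0$, where $\keyl{\rho}{\sigma}=+\infty$: along a sequence approaching such a $\sigma$ one only has $\liminf\keyl{\rho}{\sigma_n}\ge\keyl{\rho}{\sigma}$, which is precisely what lower semicontinuity provides and all the argument needs. I would also flag the degenerate-spectrum bookkeeping: when $s_i=s_{i+1}$ one has $\fd_i(s)=0$ and the $i$-th summand must be read as $0$ (consistent with the $0^0=1$ convention in the definition of $\Delta_x$), so that it never produces an indeterminate $0\cdot\infty$ and does not spoil lower semicontinuity; and the value $\keyl{\rho}{\sigma}$, hence $\infkeyl{\rho}$, does not depend on the particular diagonalizing $U$ chosen. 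Beyond these remarks the argument is routine.
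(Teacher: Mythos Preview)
Your proof is correct and follows essentially the same approach as the paper's. Both handle the forward direction identically, and for the converse both use compactness of $\s C$ to show the infimum is attained at some $\sigma^\star\in\s C$ with $\sigma^\star\neq\rho$, whence \cref{cor:keyl_div} gives $\keyl{\rho}{\sigma^\star}>0$. The only cosmetic difference is that the paper works with the genuinely continuous map $\sigma\mapsto\Delta_{s}(U^\dagger\sigma U)\in[0,1]$ and applies the extreme value theorem to its supremum, whereas you work directly with $\sigma\mapsto\keyl{\rho}{\sigma}=-\ln\Delta_s(U^\dagger\sigma U)+\text{const}$ and invoke lower semicontinuity; since $-\ln$ is a monotone, lsc transformation on $[0,1]$, the two arguments are equivalent.
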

\begin{proof}
    If $\rho \in \s C$, then \cref{cor:keyl_div} implies $\infkeyl{\rho} = \keyl{\rho}{\rho} = 0$. Otherwise if $\rho \not \in \s C$, then consider, for each fixed $x \in \mathbb R_{\geq 0}^{d;\downarrow}$ and $U \in \Unit(d)$, that the function $\sigma \mapsto \Delta_x(U^{\dagger} \sigma U) \in [0,1]$ is continuous as $\lpm_i(U^{\dagger} \sigma U)$ is a polynomial in the coefficients of $\sigma$. The compactness of $\s C$ guarantees (using the extreme value theorem) the supremum is attained by some $\sigma_{x}^{U} \in \s C$:
    \begin{equation}
        \label{eq:least_divergent_sigma}
        \Delta_x(U^{\dagger} \sigma_x^{U} U) = \sup_{\sigma \in \s C} \Delta_x(U^{\dagger} \sigma U).
    \end{equation}
    Since $\infkeyl{\rho} = \keyl{\rho}{\sigma_x^{U}}$ and $\rho \neq \sigma_x^{U}$, we conclude, from \cref{cor:keyl_div}, that $\infkeyl{\rho} \neq 0$.
\end{proof}

\end{document}